\newtheorem{proposition}{Proposition}[section]
\newtheorem{corollary}{Corollary}[section]
\newtheorem{theorem}{Theorem}[section]
\theoremstyle{definition}
\newtheorem{observation}{Observation}[section]
\newtheorem{property}{Property}
\newtheorem{definition}{Definition}[section]
\newtheorem{case}{Case}
\newcommand{\ignore}[1]{}
\title{Dynamic Maintenance of Half-Space Depth for Points and Contours}
\thanks{This research was partially supported by NSF grants \#CCF-0431027 and \#CCF-0830734}
\author{Michael A. Burr}
\address{Department of Computer Science, Tufts University, Medford, MA 02155, USA}
\curraddr{Mathematics Department, Fordham University, Bronx, NY 10458.}
\email{burr@cims.nyu.edu}
\author{Eynat Rafalin}
\address{Department of Computer Science, Tufts University, Medford, MA 02155, USA}
\curraddr{Google Inc., 1600 Amphitheatre Parkway, Mountain View, CA 94043, USA}
\email{eynat.rafalin@alumni.tufts.edu}
\author{Diane L. Souvaine}
\address{Department of Computer Science, Tufts University, Medford, MA 02155, USA}
\email{dls@cs.tufts.edu}
\begin{document}

\begin{abstract}
Half-space depth (also called Tukey depth or location depth) is one of the most commonly studied data depth measures because it possesses many desirable properties for data depth functions.  The data depth contours bound regions of increasing depth.  For the sample case, there are two competing definitions of contours: the rank-based contours and the cover-based contours.

In this paper, we present three dynamic algorithms for maintaining the half-space depth of points and contours: The first maintains the half-space depth of a single point in a data set in $O(\log n)$ time per update (insertion/deletion) and overall linear space.  By maintaining such a data structure for each data point, we present an algorithm for dynamically maintaining the rank-based contours in $O(n\cdot\log n)$ time per update and overall quadratic space. The third dynamic algorithm maintains the cover-based contours in $O(n\cdot \log^2 n)$ time per update and overall quadratic space.

We also augment our first algorithm to maintain the local cover-based contours at data points while maintaining the same complexities.  A corollary of this discussion is a strong structural result of independent interest describing the behavior of dynamic cover-based contours near data points.

{\em Key words: Data depth, Half-space depth, Dynamic algorithm, Data structure}
\end{abstract}

\maketitle
%%%%%%%%%%%%%%%%%%%%%%%%%%%%%%%%%%%%%%%%%%%%%%%%%%%%%%
\section{Introduction}
%%%%%%%%%%%%%%%%%%%%%%%%%%%%%%%%%%%%%%%%%%%%%%%%%%%%%%
\label{sec:intro}
{\em Data depth} is a statistical analysis method that assigns a {\em depth value} to a point $x\in\mathbb{R}^d$ that measures how deep (or central) $x$ is relative to a probability distribution or to a data set (see surveys \citep{Aloupis:DIMACS:DataDepth,Liu:Survey}).  Data depth has recently attracted attention from the computational geometry community \citep{Book:DIMACS:DataDepth}.  It does not require {\em a priori} assumptions on the underlying probability distribution of data and handles observations that deviate from the data (outliers).  {\em Half-space depth}, first defined by \citep{Tukey:half-space,Hodges:Bivariate}, is a data depth measure that has been commonly studied in recent years, e.g., \citep{Liu:Survey,Bremner:half-spaceDepth,Aloupis:LowerBounds,Miller:Contours}.  {\em Depth contours} are nested contours that enclose regions of increasing depth \citep{Tukey:half-space}.  They provide a powerful tool to visualize, quantify, and compare data sets (e.g., \citep{Liu:Survey,Rousseeuw:Bagplot}).  The statistics community has produced two competing definitions for depth contours, yielding distinctive contours.  The two main approaches were termed {\em cover} and {\em rank} \citep{Rafalin:Contours}, see Figure~\ref{fig:contours}.  In this paper, we present dynamic algorithms in $\mathbb{R}^2$ for maintaining the half-space depth of a data point in $O(\log n)$ time per update, the rank-based contours in $O(n\log n)$ time per update, and the cover-based contours in $O(n\log^2n)$ time per update.

The half-space depth (sometimes called the location depth or the Tukey depth) of $x\in\mathbb{R}^d$ relative to a data set $\mathcal{F}_n=\{X_1,\cdots,X_n\}$ is the minimum fraction of points of $\mathcal{F}_n$ lying in any closed half-space containing $x$.  In $\mathbb{R}^2$, the half-space depth of a {\em single point} can be computed in optimal $O(n\log n)$ time \citep{Aloupis:LowerBounds,Ruts:Contours,Bremner:half-spaceDepth} and the half-space depth of all points can be computed in $O(n^2)$ time \citep{Miller:Contours}.  In higher dimensions, \citep{Fukuda:Parallel,Bremner:half-spaceDepth,Rousseeuw:HighDimensions,Struyf:HighDimensions}\nocite{Book:Optimization} present algorithms to compute the half-space depth of a single point.  The {\em half-space median} is a point of maximal half-space depth (if more than one point has maximal depth, then the center of mass of all such points is commonly used); in $\mathbb{R}^d$, the half-space depth of the median is at least $\left\lfloor\frac{n}{d+1}\right\rfloor/n$ by Helly's theorem \citep{Eckhoff:Helly}\nocite{Book:Handbook:Convex} and at most $\lceil n/2\rceil/n$ (for data sets without double points).  In $\mathbb{R}^2$, a point with maximal depth can be computed in $O(n\log^3 n)$ time \citep{Langerman:Median}, and \citep{Chan:Randomized} presents a randomized algorithm to compute the median in $O(n\log n)$ expected time.  Chan's algorithm extends to higher dimensions with an expected time of $O(n^{d-1})$.  For other results, see \citep{Ruts:Bivariate,Ruts:Bivariate2,Matousek:Median,Langerman:Maximal}.  A {\em center point} is a point with depth at least $\left\lfloor\frac{n}{d+1}\right\rfloor/n$.  In the plane, \citep{Jadhav:Centerpoint} present an $O(n)$ algorithm for computing a center point.  For other results, see \citep{Cole:Hulls,Matousek:Median,Cole:Centerpoint2,Agarwal:Center}

The {\em sample cover contour of depth $d/n$} \citep{Tukey:half-space} is the boundary of all points (not necessarily from the data set) with depth at least $d/n$.  The idea behind the cover-based contours is based on the assumption that the data set is generated from an underlying distribution and that the contours should represent the behavior of the underlying distribution.  The {\em sample rank contour of the $\alpha$th central region} is the convex hull containing the $\alpha$ most central fraction of data points \citep{Liu:Survey}. The rank-based contours focus more on the data points themselves because data points may represent actual observations, see Figure \ref{fig:contours}.  Therefore, only data points can be vertices of rank-based contours, and they may appear on multiple contours.  On the other hand, non-data points from $\mathbb{R}^d$ can be vertices of cover-based contours and each data point will occur on a single contour (for data sets in general position). For sufficiently nice underlying distributions, both types of contours will, almost surely, approach the half space depth contours for the underlying distribution \citep{Zuo:Structural}.  Although the rank-based contours lack some of the structure of the cover-based contours, they often provide a reasonable and less computationally expensive contour.  In $\mathbb{R}^2$, any half-space depth cover-based contour edge lies on a segment between two data points.  At most $\binom{n}{2}$ such segments exist, thus the complexity of the collection of half-space cover contours is $O(n^2)$.  This bound is tight, and it is achieved, for example, when all points of the data set lie on the convex hull of the data set.  Rank-based depth contours, by definition, enclose the $1st,\frac{n-1}{n}th,\dots,\frac{1}{n}th$ sample regions, have a total complexity of $O(n^2)$ (once data points of the same depth have been arbitrarily ordered), but only involve $\Theta(n)$ distinct edges since many contours share edges.  (The bound on the number of edges can be proved by amortizing the complexity over the data points since the difference between successive contours is induced by removing a point and its two incident edges.)  In $\mathbb{R}^2$, \citep{Miller:Contours} present an optimal algorithm that computes all of the cover-based contours in $O(n^2)$ time; at the same time, it computes the depth of every data point, and, therefore, can also be used to compute rank-based contours.  Several implementations exist for computing cover-based contours: \textsc{isodepth} \citep{Ruts:Contours} can compute the $k^{\text{th}}$ contour in $O(n^2\log n)$ time.  \textsc{fdc} \citep{Johnson:Fast} computes the outermost $k$ contours and outperforms \textsc{isodepth} for small $k$.  Using hardware assisted computation, \citep{Krishnan:Hardware} compute the half-space depth cover-based contours for display.  \citep{Chan:Randomized} presents an algorithm to compute the cover-based depth contour of depth $k$ in $O(n\log^2 n)$ expected time.  In higher dimensions, \citep{Fukuda:Parallel} present an algorithm for the computation of these contours.

\begin{figure}[hbt]
\begin{center}
$\begin{array}{c@{\hspace{0.6in}}c}
\epsfysize=1.7in
\epsffile{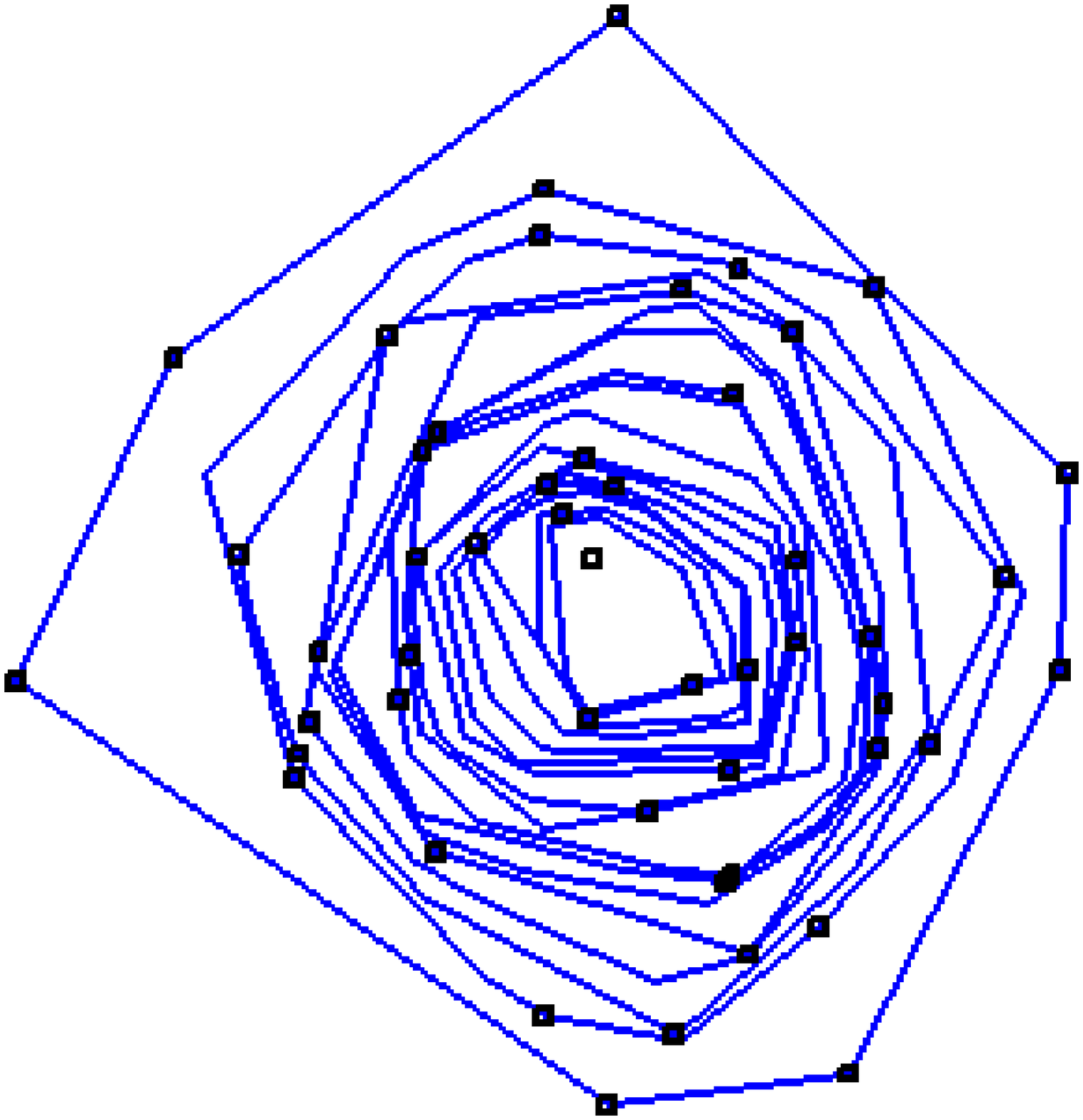} &
    \epsfysize=1.7in
    \epsffile{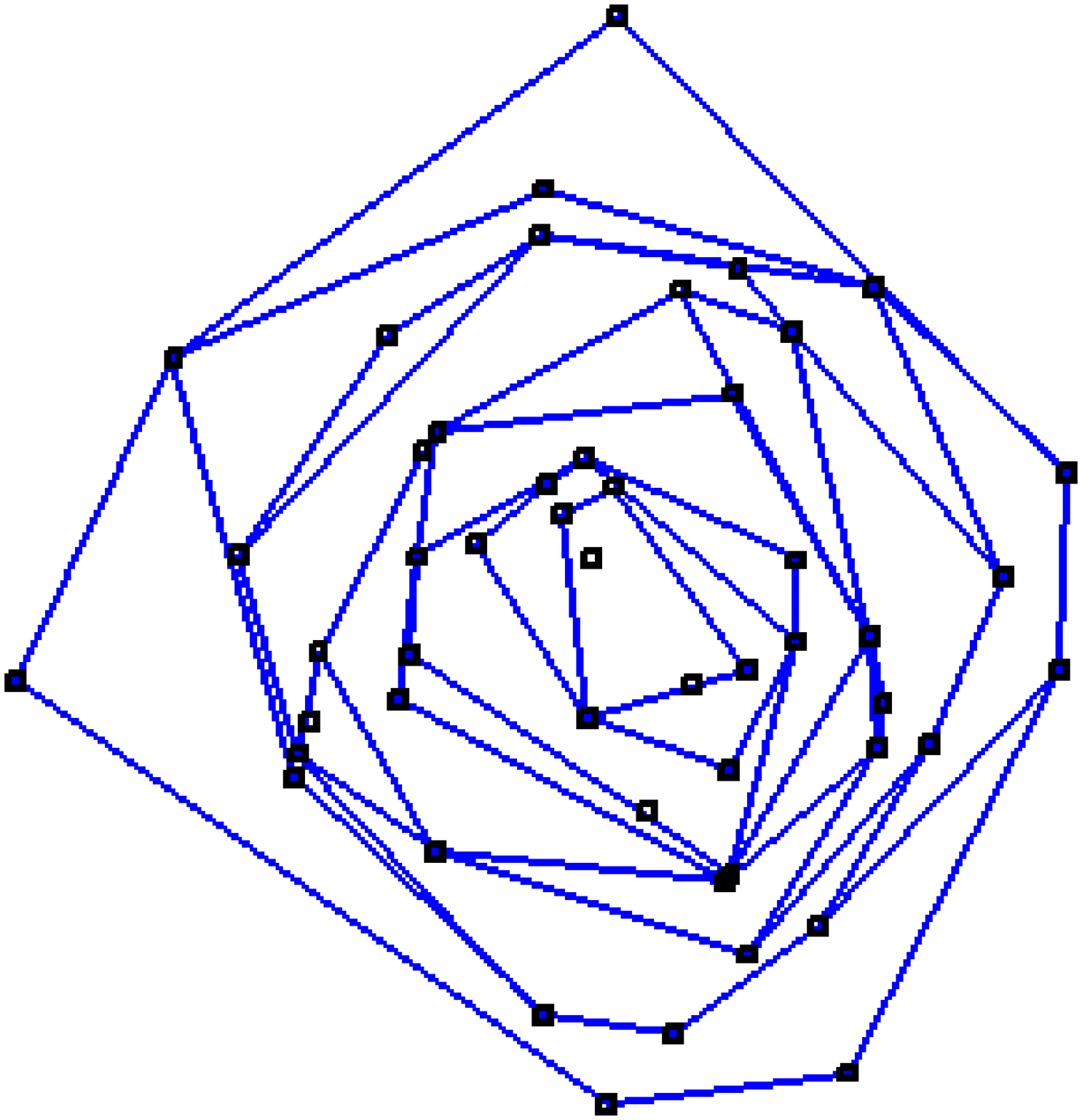} \\
\mbox{(a)} & \mbox{(b)}
\end{array}$
\end{center}
\caption{Half-space depth:
(a) All {\em cover-based} contours for 50 points
from a bivariate normal distribution with mean $(0,0)$ and covariance matrix $4 I$; (b) The $10\%, 20\% \dots 100\%$ {\em rank-based} contours for the same data set.  Both sets of contours have similar shapes, but are composed of distinct polygons: for example, the cover-based contours do not share vertices while the rank-based contours do share vertices.}
\label{fig:contours}
\end{figure}

Few dynamic algorithms (allowing the insertion or deletion of data points) for data depth computations exist, despite the interest of the statistical community \citep{Liu:Personal}.  Recently, \citep{DynamicHam} presented a dynamic algorithm which can be used to maintain a point of half-space depth at least $1/4$ and also a dynamic algorithm which deals with the case where the convex hull peeling depth \citep{Barnett:Peeling,Eddy:Peeling} of a data set is bounded.  Another result, presented in \citep{Agarwal:Dynamic}, dynamically maintains a data structure for half-space queries; this solves a different, but related, problem than to maintain the half-space depth of a point.  For cover contours, our algorithm is essentially maintaining the $k$-belts of an arrangement, as introduced in \citep{EdelsbrunnerWelzelBelts}.  In \citep{EdelsbrunnerWelzelBelts}, the authors construct the levels of an arrangement using a sweepline and claim, in passing, that the algorithm can be made dynamic.  While their algorithm is related to ours since we are solving the same problems and use similar techniques and tricks, it is difficult to compare the algorithms: the authors neither provide an explicit version of their dynamic algorithm nor any claims as to its complexity.

This paper presents the first dynamic algorithm for computing the half-space depth contours in $\mathbb{R}^2$ of a data set of $n$ points in general position.  In Section 2, we review some properties of half-space depth and half-space depth contours.  In Section 3, we present an algorithm for maintaining the half-space depth of a single point in $O(\log n)$ time per operation (insertion or deletion) and linear overall space.  The data structure that we use was originally presented in \citep{Kreveld:Regression}, but we augment it to enable dynamic updates.  In Section 4, we expand the algorithm and augment the data structure in Section 3 not only to maintain the depth of a point, but also to maintain the cover-based contours near points that do not lie on degenerate contours, i.e., contours consisting of either a single point or two points and the segment between them.  Through this analysis, we present a theorem of independent interest that shows that the cover-based contours cannot change too much after an insertion or deletion.  In Section 5, we use the algorithm from Section 3 (or from Section 4, since that algorithm is an augmented version of the algorithm from Section 3) to develop an algorithm to maintain the rank-based half-space depth contours in $O(n\log n)$ time per update and overall quadratic space.  In Section 6, we present an algorithm and data structure to maintain the cover-based half-space depth contours in $O(n\log^2 n)$ time per operation and overall quadratic space.  Both of these algorithms represent improvements over the static running time of $O(n^2)$.  Finally, we conclude in Section 7.

%%%%%%%%%%%%%%%%%%%%%%%%%%%%%%%%%%%%%%%%%%%%%%%%%%%%%%
\section{Half-Space Data Depth and Contours}
%%%%%%%%%%%%%%%%%%%%%%%%%%%%%%%%%%%%%%%%%%%%%%%%%%%%%%
In this section, we review the basic definitions and properties of half-space depth and its contours.  In this direction, we provide alternate combinatorial descriptions of these definitions which will be useful throughout this paper.  For example, the global description of the contours will be used in Section \ref{sec:DynamicCover}, and the discussion of the description of the local half-space depth contours near a data point will be completed in Section \ref{sec:localcontours}.  All of the half-spaces considered in this paper are closed.

\begin{definition}[\citep{Tukey:half-space}]
The half-space depth of a point $p$ with respect to a data set $\mathcal{F}_n=\{X_1,\cdots,X_n\}$ is the smallest fraction of data points contained in a half-space containing $p$, i.e., $\min\{(\#H\cap\mathcal{F}_n)/n|p\in H\text{ a half-space}\}$.  It is sufficient to consider only half-spaces that include $p$ in their boundary, i.e.,  $\min\{(\#H\cap\mathcal{F}_n)/n|p\in\partial H,\text{ }H\text{ a half-space}\}$.
\end{definition}

There is some tension between how half-space depth is computed in the statistics and probability community and in the computational geometry community.  The statistics and probability viewpoint defines half-space depth in terms of probabilities over half-spaces \citep{Liu:Survey}.  The computational geometry viewpoint prefers to count the number of data points in a half-space.  In our definition, we choose to represent the half-space depth as the fraction, instead of the number, of points of $\mathcal{F}_n$ in the half-space $H$ because this definition more accurately reflects the relationship between the finite sample case and the continuous case, i.e., the definition for distributions \citep{Liu:Survey}.  Even though we use this definition in our theoretical proofs, in our algorithms, the standard computational geometry definition becomes more desirable to maintain our time complexities.

Half-space depth has many nice properties.  The half-space median is robust in the presence of outliers because Helly's theorem implies that at least $\left\lfloor\frac{n}{d+1}\right\rfloor$ of the data points in $\mathcal{F}_n$ must be outliers to alter the position of the median significantly.  Half-space depth has all of the desirable properties for a data depth function: affine invariance, maximality at the center, monotonicity relative to the deepest point, vanishing at infinity, and invariance under dimensions change \citep{Zuo:Structural,SimplicialDepth}.

\subsection{Half-Space Depth Contours}
There are two main definitions for data depth contours called the cover-based contours and the rank-based contours \citep{Rafalin:Contours}, and, for most data sets, they provide distinct contours.

\begin{definition}[Cover-based contours \citep{Tukey:half-space}]
The {\em cover-based contour of depth $t$} is the boundary of $\{x\in\mathbb{R}^d\text{ }|\text{ Half-space depth of $x$ is at least }t\}$.  This approach constructs a contour around all points of $\mathbb{R}^d$ (not necessarily from the data set) with half-space depth at least $t$.
\end{definition}

A simple, albeit inefficient, way to compute the cover-based contour of depth $t$ in $\mathbb{R}^d$ (which will be justified below) is as follows: first, divide the plane into cells determined by hyper-planes through $d$ data points.  The half-space depth is constant in each of these cells.  Second, pick a point in each cell (which cannot be a data point), and compute its half-space depth.  Finally, take the closure of the union of all cells with half-space depth at least $t$; the boundary of this region is the cover-based contour of depth $t$.

\begin{definition}[Rank-based contours \citep{Liu:Survey}]
The {\em rank-based contour of the $\alpha$th central region} is the convex hull of the $\alpha$ most central fraction of data points where ties are broken arbitrarily.  This approach constructs a contour around the deepest data points.
\end{definition}

A simple way to compute the rank-based contour of the $\alpha$th central region is to first compute the depth of every data point and sort the data points by depth (in this sorted list, ties are broken arbitrarily).  Then, take the $n\cdot\alpha$ data points with largest half-space depth (with respect to the arbitrary ordering) and compute the convex hull of this set.  This convex hull is the rank-based contour of the $\alpha$th central region.

Both contour definitions can be adapted to any data depth function by the appropriate change in the definition.  In this generality, both contour definitions have their merits, for example, the rank-based contours are always convex and nested while the cover-based contours have more structure, see \citep{SimplicialDepth}.  For half-space depth, in particular, the cover-based contours can be described in terms of a certain finite collection of half-spaces  determined by the data points.  On the other hand, the rank-based contours have no such description.  In the remainder of this section, we recall this description.  We begin with the following assumption: for the remainder of the paper, we assume that {\em all data sets are in general position}.

The definition of half-space depth is a minimization problem over all half-spaces that contain a given point $p$.  As is usual in this setup, there is a dual problem that describes all points of a given half-space depth, as in the following proposition:

\begin{proposition}\label{prop:inter}
The cover-based contour of depth $k/n$ in $\mathbb{R}^d$ is the boundary of the intersection of all closed half-spaces containing exactly $n-k+1$ data points.
\end{proposition}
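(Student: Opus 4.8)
The plan is to prove the equivalent set-theoretic statement and then pass to boundaries. Write $D_k=\{x\in\mathbb{R}^d\mid\text{half-space depth of }x\text{ is at least }k/n\}$, so that the cover-based contour of depth $k/n$ is $\partial D_k$ by definition, and let $\mathcal{S}_k$ denote the collection of all closed half-spaces $H$ with $\#(H\cap\mathcal{F}_n)=n-k+1$. It suffices to show $D_k=\bigcap_{H\in\mathcal{S}_k}H$, since the proposition is exactly the assertion that $\partial D_k=\partial\bigl(\bigcap_{H\in\mathcal{S}_k}H\bigr)$. The governing reformulation is that, by the definition of half-space depth, $x\in D_k$ if and only if every closed half-space containing $x$ contains at least $k$ data points, equivalently $x\notin D_k$ if and only if there is a closed half-space $G$ with $x\in G$ and $\#(G\cap\mathcal{F}_n)\le k-1$. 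Throughout, I would exploit the standing general-position hypothesis, which lets me translate a boundary hyperplane so that it crosses the data points one at a time.

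For the inclusion $D_k\subseteq\bigcap_{H\in\mathcal{S}_k}H$, I would argue by contradiction: suppose $x\in D_k$ but $x\notin H$ for some $H\in\mathcal{S}_k$. Since $H$ is closed and $x$ lies in its open complement, the points of $\mathcal{F}_n$ strictly outside $H$ number exactly $n-(n-k+1)=k-1$, while any data points lying on $\partial H$ still belong to $H$. I would then slide the bounding hyperplane a small distance toward $H$, producing a closed half-space $G$ that still contains $x$ (as $x$ was strictly outside $H$) but whose boundary has moved off of $\partial H$, so $G$ now contains exactly the $k-1$ points that were strictly outside $H$. Thus $\#(G\cap\mathcal{F}_n)=k-1<k$ with $x\in G$, contradicting $x\in D_k$.

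For the reverse inclusion I would prove the contrapositive: assume $x\notin D_k$ and exhibit some $H\in\mathcal{S}_k$ with $x\notin H$. Take a witnessing closed half-space $G\ni x$ with $\#(G\cap\mathcal{F}_n)\le k-1$. After enlarging $G$ by a small outward translation of its boundary, I may assume $x$ lies strictly in the interior of $G$ while the data count is unchanged (no new points are swept in for a small enough translation). The complementary closed half-space $H_0$ then satisfies $x\notin H_0$ and $\#(H_0\cap\mathcal{F}_n)=n-\#(G\cap\mathcal{F}_n)\ge n-k+1$. Finally I would shrink $H_0$ by translating its boundary in the normal direction; by general position the count decreases by exactly one each time the boundary passes a single data point, so I can stop precisely when the count equals $n-k+1$, all the while keeping $x$ outside. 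This yields the desired $H\in\mathcal{S}_k$ with $x\notin H$.

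Combining the two inclusions gives $D_k=\bigcap_{H\in\mathcal{S}_k}H$, and taking boundaries proves the proposition. The step I expect to be the genuine obstacle is the careful bookkeeping of data points lying on the boundary hyperplanes: every count above is a count for a \emph{closed} half-space, and the conversion of an inequality ($\ge n-k+1$ or $\le k-1$) into the exact value $n-k+1$, together with the need to place $x$ strictly inside or strictly outside, is precisely where the perturbation and one-point-at-a-time sweeping arguments must be justified using general position. Once those elementary translations are made rigorous, both inclusions follow cleanly.
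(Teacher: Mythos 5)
Your overall strategy---reduce to the set identity $D_k=\bigcap_{H\in\mathcal{S}_k}H$, prove the two inclusions by perturbing half-spaces, then take boundaries---is sound, and it is essentially the route the paper has in mind: the paper gives no detailed proof of this proposition, saying only that it is ``not difficult'' but requires technical lemmas, and that it can be proved ``using translations and rotations,'' with a citation for a sketch. Your first inclusion ($D_k\subseteq\bigcap_{H\in\mathcal{S}_k}H$, via a small inward translation that strips the boundary points off $\partial H$) is correct, as are the initial reductions in the second inclusion (thickening the witness $G$ so that $x$ becomes interior, then passing to the complementary closed half-space $H_0$ with $x\notin H_0$ and $\#(H_0\cap\mathcal{F}_n)\ge n-k+1$).

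The genuine gap is in the final sweep. You claim that, as $\partial H_0$ is translated along its normal direction, ``by general position the count decreases by exactly one each time the boundary passes a single data point.'' General position means no $d+1$ data points lie on a common hyperplane; it does \emph{not} prevent two data points from lying on a common hyperplane \emph{parallel} to $\partial H_0$, i.e., from having equal inner product with the normal $a$ of $H_0$. Since $a$ is inherited from an arbitrary witnessing half-space $G$, nothing rules this out: in the plane, for instance, $G$ may have horizontal boundary while two data points share a $y$-coordinate, which is perfectly consistent with no three points being collinear. At such a moment the count drops by two (or more) in a single step and can skip the value $n-k+1$ entirely, so your stopping rule fails. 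This is precisely why the paper's sketch invokes rotations in addition to translations. The repair is standard: either first perturb the normal direction to a generic one---the bad directions, namely those orthogonal to some difference $p-q$ of data points, form a finite union of measure-zero subsets of the unit sphere, and a sufficiently small perturbation of $a$ preserves both the count in $H_0$ and the fact that $x\notin H_0$---and then sweep; or, whenever the sweeping hyperplane meets several data points simultaneously, rotate it about (the affine hull of) an appropriate subset of them so as to shed the points one at a time. With either repair, your argument goes through.
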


This result can, in fact, be improved to only consider a finite number of half-spaces.

\begin{corollary}\label{cor:inter}
The cover-based contour of depth $k/n$ in $\mathbb{R}^d$ is the boundary of the intersection of all closed half-spaces containing exactly $n-k+1$ data points and whose boundary contains exactly $d$ data points.
\end{corollary}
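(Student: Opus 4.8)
The plan is to deduce the corollary from Proposition~\ref{prop:inter} by showing that the extra condition on the boundary does not enlarge the intersection. Write $\mathcal{H}$ for the collection of all closed half-spaces containing exactly $n-k+1$ data points and $\mathcal{H}_d\subseteq\mathcal{H}$ for the sub-collection whose boundary contains exactly $d$ data points; set $R=\bigcap_{H\in\mathcal{H}}H$ and $R_d=\bigcap_{H\in\mathcal{H}_d}H$. By Proposition~\ref{prop:inter}, $R$ is the region bounded by the cover-based contour of depth $k/n$, and since $\mathcal{H}_d\subseteq\mathcal{H}$ we trivially have $R\subseteq R_d$. Thus it suffices to prove $R_d\subseteq R$, for then $\partial R=\partial R_d$ gives the statement. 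I would prove the contrapositive: given a point $p\notin R$, I produce a half-space $H^*\in\mathcal{H}_d$ with $p\notin H^*$, which certifies that $p\notin R_d$.

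Since $p\notin R$, there is some $H_0\in\mathcal{H}$ with $p\notin H_0$, and the core of the argument is a pivoting procedure that transforms $H_0$ into a member of $\mathcal{H}_d$ while preserving both the count $n-k+1$ and the exclusion of $p$. First, because the data is in general position, $\partial H_0$ contains no data point, so I would translate $\partial H_0$ into the interior of $H_0$ until it first meets a data point $v_1$. At that instant $v_1$ lies on the boundary but is still counted in the closed half-space, so the count remains $n-k+1$; moreover the new half-space is contained in $H_0$, so $p$ stays outside. This yields a half-space with exactly one data point on its boundary.

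Next I would iterate a rotation step. Suppose the current half-space has $j<d$ data points $v_1,\dots,v_j$ on its boundary; in general position these span a $(j-1)$-flat $A$, and the hyperplanes through $A$ form a pencil of dimension $d-j\ge1$. I would rotate the boundary within this pencil, keeping $v_1,\dots,v_j$ fixed, until a new data point $v_{j+1}$ reaches it, stopping at the first data-point event. General position guarantees that exactly one new point arrives and that the boundary then carries exactly $j+1$ data points, so after $d-1$ rotations the boundary carries exactly $d$ points and no further incidence is possible. Iterating to $j=d$ produces the desired $H^*\in\mathcal{H}_d$ with $p\notin H^*$.

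The main obstacle is the bookkeeping in the rotation step: a rotation about $A$ pushes the hyperplane into the interior of the half-space on one side of $A$ while sweeping into the complement on the other side, so the inside count is not monotone and can simultaneously gain and lose points. The work is to select the rotation direction so that the arriving point $v_{j+1}$ migrates from the interior to the boundary, leaving the closed count equal to $n-k+1$, while $p$ remains strictly outside; when a single direction does not obviously do this, I expect to need a discrete intermediate-value argument over the circle of directions in the pencil. I anticipate this is the only delicate point, and I note that it presupposes $n-k+1\ge d$, which is exactly the range in which $\mathcal{H}_d$ is nonempty and the contour is non-degenerate; once it is settled, general position makes every incidence simple and forces termination with exactly $d$ boundary points, completing the proof.
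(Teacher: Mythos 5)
Your proposal takes essentially the same route as the paper: the paper likewise deduces the corollary from Proposition~\ref{prop:inter} by translating a half-space (sliding its bounding hyperplane until it meets a data point) and then rotating about the flat spanned by the boundary points while preserving the count of $n-k+1$, and it only sketches this, explicitly deferring the ``technical, linear algebra-type lemmas'' to a citation. The delicate bookkeeping you flag in the rotation step (non-monotone counts, choice of direction, keeping the witness point outside) is exactly the technical content the paper also leaves unworked, so your plan matches the paper's argument in both structure and level of detail.
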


These results are not difficult, but they both require some technical, linear algebra-type lemmas.  For example, one way of proving these equivalences is to use translations and rotations, see \citep{Cole:Hulls} for a sketch.  To translate a given half-space means to find a second half-space whose bounding hyperplane is parallel to the bounding hyperplane of the original half-space and both half-spaces open in the same direction.  To rotate a given half-space around a subflat of the bounding hyperplane means to find a second half-space where the bounding hyperplane includes the given subflat (alternately, this rotation can be interpreted as a point in the appropriate Grassmannian).  Note that Corollary \ref{cor:inter} does not say that the $d$ data points lie on the contour of depth $k/n$, only that the faces of the contour of depth $k/n$ lie in hyper-planes that each contain $d$ points, but these points may be far from the contour itself, e.g., see Figures \ref{fig:contours}(a) and \ref{fig:degenerate}(a).

In addition, Corollary \ref{cor:inter} provides an explicit way to compute the half-space depth of a data point by investigating a finite number of half-spaces.  In particular, we will use the following corollary (in 2-dimensions) to motivate our data structures:

\begin{corollary}\label{cor:definition}
If $p$ is a data point of $\mathcal{F}_n$ in $\mathbb{R}^d$, then to compute the half-space depth of $p$ it is sufficient to consider half-planes whose boundaries include both $p$ and $d-1$ other data points of $\mathcal{F}_n$, i.e.,  $\min\{(\#H\cap\mathcal{F}_n-(d-1))/n|\#\mathcal{F}_n\cap\partial H=d\text{ and }H\text{ a half-space}\}$.
\end{corollary}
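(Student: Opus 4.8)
The plan is to read the claimed identity as an equality between two minimization problems and to establish the two inequalities separately, with the term $-(d-1)$ serving purely as bookkeeping for the $d-1$ data points (besides $p$) that the constraint $\#(\mathcal{F}_n\cap\partial H)=d$ forces onto the boundary. Write $D=\min\{\#(H\cap\mathcal{F}_n)\mid p\in\partial H,\ H\text{ a half-space}\}$ for the quantity controlled by the Definition, so that the half-space depth of $p$ equals $D/n$, and write $F$ for the minimum on the right-hand side of the statement. Because the data set is in general position, a hyperplane meeting $d$ data points meets no others, so every half-space $H$ admissible in $F$ has precisely $p$ together with $d-1$ further data points on $\partial H$, and all $d$ of these lie in the closed half-space $H$.

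First I would prove the easy inequality $F\ge D$. Let $H$ be admissible for $F$, so $p\in\partial H$ and $\partial H$ carries $p,q_1,\dots,q_{d-1}$. Writing the inward normal of $\partial H$ as $\nu$, I would perturb $\nu$ to $\nu+\varepsilon w$ for small $\varepsilon>0$, keeping $p$ on the plane; a boundary point $q_i$ then acquires signed height proportional to $w\cdot(q_i-p)$. Since the vectors $q_i-p$ are linearly independent, the Gram system $w\cdot(q_j-p)=-1$ has a solution, and this $w$ pushes all of $q_1,\dots,q_{d-1}$ into the open complement of $H$ while leaving every strictly interior point interior. The resulting half-space $H'$ still satisfies $p\in\partial H'$, with $\#(H'\cap\mathcal{F}_n)=\#(H\cap\mathcal{F}_n)-(d-1)\ge D$ by the definition of $D$. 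Taking the minimum over admissible $H$ gives $F\ge D$.

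For the reverse inequality $F\le D$ I would start from a half-space $H^{*}$ attaining $D$; by general position I may assume $\partial H^{*}\cap\mathcal{F}_n=\{p\}$, so that the open $H^{*}$-side contains exactly $D-1$ data points and the open complement contains $n-D\ge d-1$ of them (the bound uses $D\le\lceil n/2\rceil$ from the introduction, for $n\ge 2(d-1)$). I then rotate $\partial H^{*}$ within the pencil of hyperplanes through $p$, capturing one new data point at a time: first rotate about $p$ until a point $q_1$ is reached, then rotate within the hyperplanes through $p,q_1$ until $q_2$ is reached, and so on. If each step is steered so that the captured point comes from the open complement of $H^{*}$, then no interior point is ever ejected and no new complement point enters, so after $d-1$ steps I obtain an admissible half-space $H$ with strict interior still of size $D-1$ and exactly $d$ data points on $\partial H$. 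Thus $\#(H\cap\mathcal{F}_n)=(D-1)+d=D+d-1$, whence $\#(H\cap\mathcal{F}_n)-(d-1)=D$ and $F\le D$. Combining the two inequalities yields $F=D$, which is the assertion.

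The hard part is exactly the rotation in the second inequality: I must justify that at each stage a rotation direction exists that brings a point from the \emph{complement} onto the boundary before any interior point can reach it, so that the closed count increases by precisely $d-1$. This is the content of the technical linear-algebra lemmas alluded to after Corollary~\ref{cor:inter} (cf.\ the translation/rotation sketch in \citep{Cole:Hulls}), and it is where general position is genuinely needed. An alternative that sidesteps the explicit rotation would be to deduce $F\le D$ directly from Corollary~\ref{cor:inter}: taking $k=D$, the point $p$ lies on the cover-based contour of depth $D/n$, so some defining half-space $H_1$ with $\#(H_1\cap\mathcal{F}_n)=n-D+1$ and $d$ boundary points has $p\in\partial H_1$; its complementary closed half-space is then admissible for $F$ and realizes the value $D$, giving $F\le D$ immediately.
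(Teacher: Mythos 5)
Your two-inequality decomposition is sound and supplies far more detail than the paper itself, whose entire justification for this statement is its framing as a consequence of Corollary~\ref{cor:inter} plus the remark that the $-(d-1)$ appears because a small rotation can remove the $d-1$ data points other than $p$ from $\partial H$. Your Gram-system perturbation for $F\ge D$ is exactly that remark made rigorous, and your alternative derivation of $F\le D$ from Corollary~\ref{cor:inter} is the reading the paper intends. The one unproved ingredient in that alternative --- that a data point of depth $D/n$ lies on the cover-based contour of depth $D/n$ --- is also assumed without proof by the paper (it is invoked immediately after this corollary to deduce Corollary~\ref{cor:localcontours}), and it follows quickly from your own setup: translate the minimizing half-space $H^{*}$ a distance $\varepsilon$ into its interior along its inward normal $\nu$; the translate contains the $D-1$ interior points but not $p$, so the points $p+\varepsilon\nu$ have depth at most $(D-1)/n$, witnessing that $p$ is a boundary point of the depth-$\ge D/n$ region.

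The gap you flag in the rotation route closes with tools you already built, and neither steering nor the restriction $n\ge 2(d-1)$ is needed. Maintain the invariant that the boundary hyperplane carries the affinely independent data points $p,q_1,\dots,q_j$ and that the open side holds exactly $D-1$ data points; choose the rotation direction generically so that no two data points are swept simultaneously, and rotate until the first data point is hit. If that point were \emph{interior}, rotate slightly past it (ejecting it) and then apply your inequality-one perturbation to push $q_1,\dots,q_j$ off the boundary: the result is a half-space with $p$ on its boundary containing only $D-1$ data points, contradicting the minimality of $D$. Hence the first point swept automatically comes from the complement, the invariant is preserved, and the process runs to $j=d-1$ provided some data point lies off the current hyperplane, i.e., whenever $n\ge d$ --- exactly the condition for the right-hand minimum to be over a nonempty set. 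The same minimality-plus-perturbation reasoning also justifies your assumption that $\partial H^{*}\cap\mathcal{F}_n=\{p\}$ at the optimum, which general position alone does not give.
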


Note that the $-(d-1)/n$ appears in the formula because a small rotation can remove the $(d-1)$ data points, other than $p$, from $\partial H$.  Since every data point appears on some contour, Corollary \ref{cor:definition} implies the following corollary which is an initial description for the local cover-based contours near a data point:

\begin{corollary}\label{cor:localcontours}
If $p$ is a data point of half-space depth $k/n$ in $\mathbb{R}^d$, then the edges of the cover-based contour of depth $k/n$ that are incident to $p$ are defined by planes through $p$ and $d-1$ other data points.
\end{corollary}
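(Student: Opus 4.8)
The plan is to deduce the statement from Corollary \ref{cor:inter} together with the fact, recalled just above, that every data point appears on a contour. First I would observe that since $p$ has half-space depth exactly $k/n$, the point $p$ lies in the closed region of depth at least $k/n$ but not in the (strictly smaller, by general position) region of depth at least $(k+1)/n$; hence $p$ lies on the cover-based contour of depth $k/n$, so that the corollary is not vacuous. This is precisely the assertion that the data point $p$ appears on the contour of its own depth, and it is where Corollary \ref{cor:definition} enters conceptually: it guarantees that the depth $k/n$ of $p$ is realized by a half-space through $p$ and $d-1$ other data points, which is the local, $p$-anchored form of the global enumeration.

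Next I would invoke Corollary \ref{cor:inter}: the contour of depth $k/n$ is the boundary of an intersection of closed half-spaces, each of which has exactly $d$ data points on its bounding hyperplane. This intersection is a convex polyhedron, and each of its facets --- in the plane, each edge --- is supported by one of these bounding hyperplanes. Consequently every edge of the contour of depth $k/n$ lies on a hyperplane passing through exactly $d$ data points of $\mathcal{F}_n$.

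To finish, fix an edge $e$ of the contour incident to $p$, and let $H$ be a half-space from the family above whose boundary supports $e$. Since $e\subseteq\partial H$ and $p\in e$, we have $p\in\partial H$. But $p$ is itself a data point, so $p$ is one of the data points counted on $\partial H$; as $\partial H$ contains exactly $d$ of them, it passes through $p$ together with exactly $d-1$ other data points, which is the claim (in the plane, $p$ and one further data point). The same argument applies verbatim to each edge incident to $p$, and no computation is needed once one notes that a data point lying on a supporting hyperplane must be among that hyperplane's $d$ data points.

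I expect the only genuine obstacle to be the first step --- verifying that $p$ truly lies on the contour of depth exactly $k/n$ rather than strictly inside the depth-$k/n$ region. This is where general position is essential: one must rule out the degenerate possibility that $p$ is a strict interior local minimum of the depth function and confirm that the depth-$(k+1)/n$ region is strictly contained in the depth-$k/n$ region, so that $p$ sits on the boundary. The remaining steps are then immediate.
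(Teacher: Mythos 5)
Your main argument is correct, and it takes a genuinely different route from the paper's. The paper disposes of this corollary in a single sentence: since every data point appears on some contour, Corollary~\ref{cor:definition} (the depth of $p$ is realized by half-spaces whose boundaries contain $p$ and $d-1$ other data points) is said to imply the statement; the paper never spells out how those minimizing half-spaces are connected to the actual contour edges at $p$. You instead argue from the global description in Corollary~\ref{cor:inter}: every facet of the depth-$k/n$ region lies on the boundary of one of the half-spaces in that finite family, each such boundary contains \emph{exactly} $d$ data points, and since a facet incident to $p$ has $p$ in its closure, $p$ lies on the supporting hyperplane and is therefore one of its $d$ counted data points. This localization-by-membership step is cleaner and more rigorous than the paper's appeal to Corollary~\ref{cor:definition}; what it costs you is the (standard, but worth stating) convexity fact that a $(d-1)$-dimensional face of a finite intersection of closed half-spaces must lie inside the boundary of a single member of the family.

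The weak point is exactly the one you flag: showing that $p$ lies on the contour of its own depth, i.e.\ that $p$ is not interior to $D_k=\{x : \text{depth of }x\ge k/n\}$. The paper needs this too and merely asserts it, so you are not behind the paper here, but your proposed justification does not work: knowing $p\in D_k\setminus D_{k+1}$ and $D_{k+1}\subsetneq D_k$ does \emph{not} place $p$ on $\partial D_k$, because $D_k\setminus D_{k+1}$ is not contained in $\partial D_k$. (Example: $D_1$ is the convex hull, and points just inside a hull vertex have depth exactly $1/n$ yet lie in the interior of $D_1$.) What actually rules out $p\in\operatorname{int}(D_k)$ is that $p$ is itself a data point and is counted by its own witness: using Corollary~\ref{cor:definition} and the rotation remark following it, take a closed half-space $H$ with $p\in\partial H$, no other data point on $\partial H$, and exactly $k$ data points in $H$; translating $\partial H$ past $p$ by a small $\delta>0$ yields a half-space containing only $k-1$ data points, so every neighborhood of $p$ contains points of depth at most $(k-1)/n$, forcing $p\in\partial D_k$. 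General position enters only through that rotation, not through any strict nesting of the regions $D_k$ and $D_{k+1}$. With this repair, your proof is complete.
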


\subsection{Terminology}\label{sec:terminology}

Our discussion above has highlighted the data which we need in order to maintain the rank-based or cover-based contours.  In this section, we define the terminology for the half-spaces and their associated geometric objects which will be used through the remainder of this paper.  We begin with the following assumption: for the remainder of this paper, we {\em restrict our attention to $\mathbb{R}^2$}.

For most cover-based contours, the contour bounds a region, but, for the deepest contour, it is possible that the contour does not bound a region, but consists of either a single point or a segment.  Our algorithm must treat these degenerate contours as a special case.

\begin{definition}
A cover-based contour is {\em degenerate} if it consists of either a single point or of two points connected by a segment, see Figure \ref{fig:degenerate}(a).
\end{definition}

\begin{figure}[hbt]
\epsfysize=1.7in
\centerline{\begin{tabular}{c@{\hspace{.6in}}c}
\epsfbox{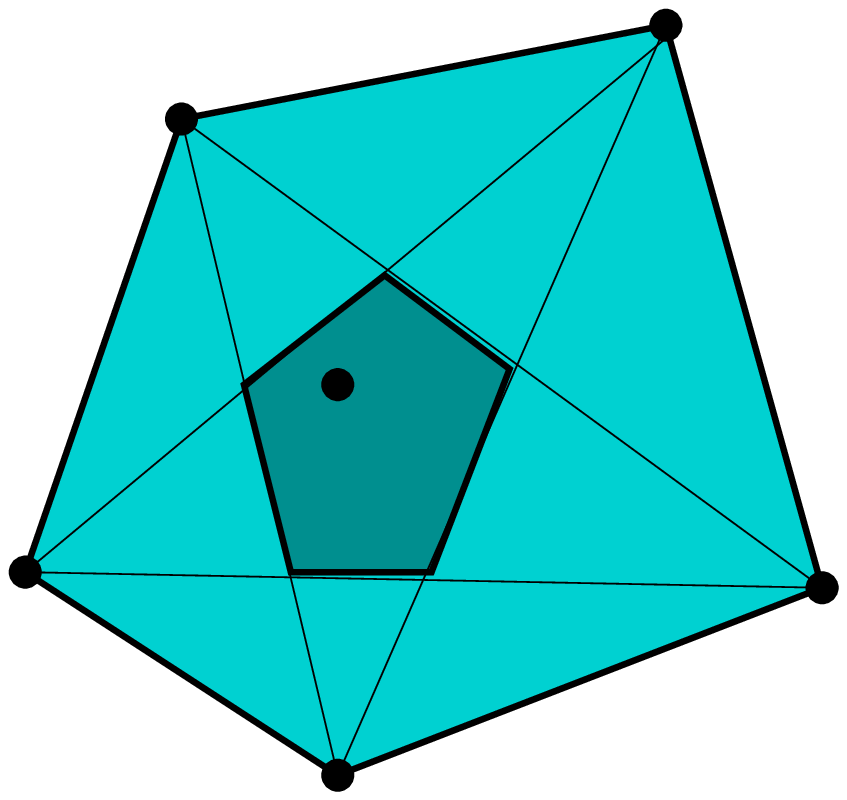}&\epsffile{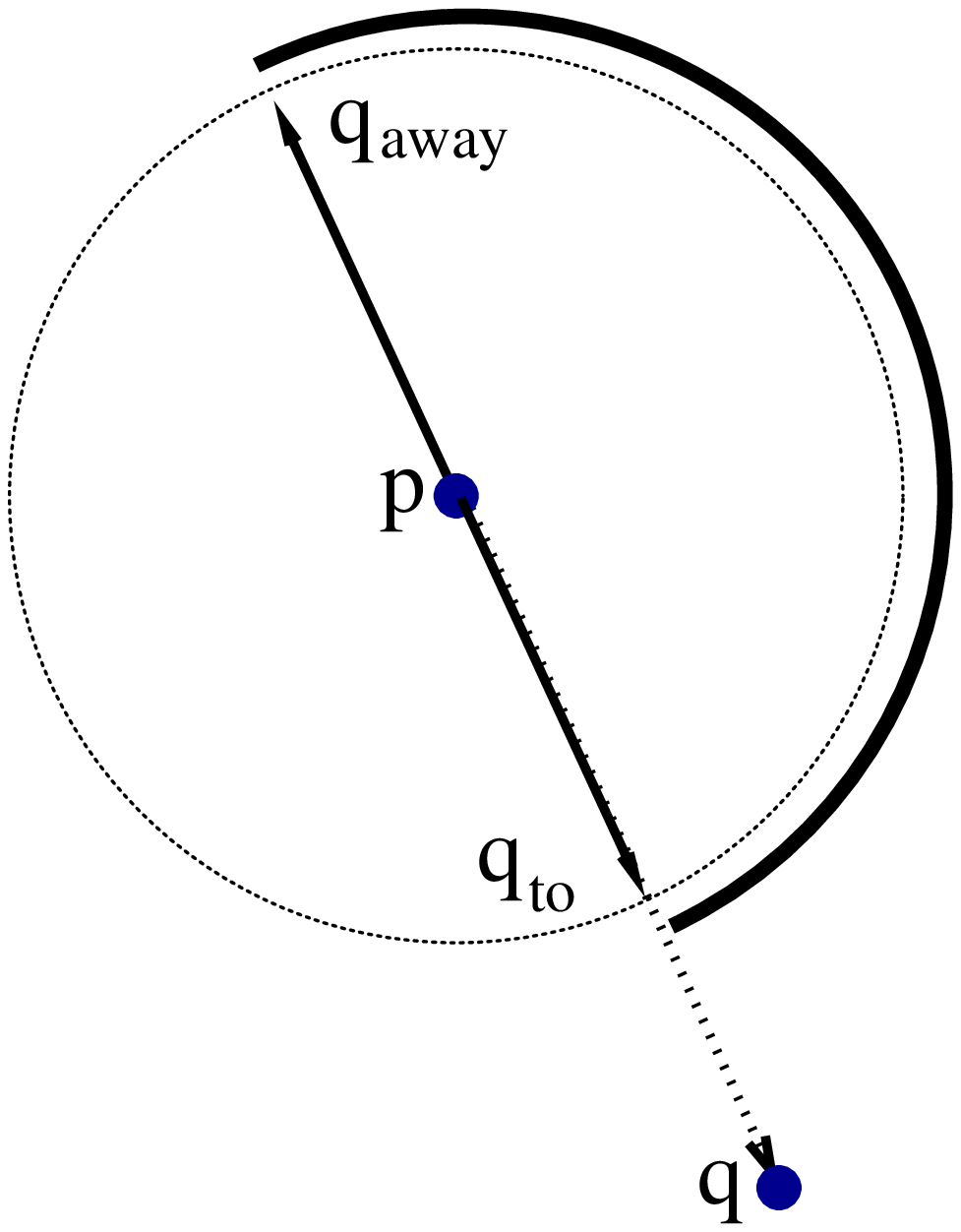}\\
(a)&(b)
\end{tabular}}
\caption{(a) Degenerate half-space depth contours: A set of 6 points. The outermost contour is the convex hull of the data set; the next contour contains 5 vertices, none a point of the data set. The inner-most contour contains one data point. (b) The transformation of the half-planes determined by $q$ and $p$.  The bold semicircle represents all half-planes whose number of points will be incremented (decremented) by one when $q$ is inserted into (deleted from) the data set because $q$ is to the right of the associated vector.}\label{fig:vectors}
\label{fig:degenerate}
\end{figure}

It is easy to show using Corollary \ref{cor:inter} that the only time that a degenerate cover-based contour consists of a segment is when the data set has exactly two points.  In non-degenerate cases, the half-space depth contours bound a region and we now define the features of these regions locally near a data point:

\begin{definition}
The {\em defining edges (lines)} of data point $p$ of half-space depth $k/n$ with respect to the data set $\mathcal{F}_n$, whose cover-based contour of depth $k/n$ is non-degenerate, are the two edges (lines defined by the edges) incident to $p$ on the cover-based contour of depth $k/n$.  We will use the notation $l_1$ and $l_2$ for the defining lines.
\end{definition}

Each of the defining lines for a point $p$ determines two half-planes, one of which contains $n-k+1$ data points while the other contains $k+1$ data points.  The half-plane containing $k+1$ data points is important because it describes the contour of depth $k/n$ locally and provides a witness for the depth of $p$, see Figure \ref{fig:contours2}.

\begin{definition}
The {\em defining half-planes} $H_{l_1}, H_{l_2}$ for data point $p$ of depth $k/n$ with respect to $\mathcal{F}_n$, whose cover-based of depth $k/n$ contour is non-degenerate, are the closed half-planes bounded by the defining lines of point $p$ with respect to $\mathcal{F}_n$ and which contain $k+1$ data points.
\end{definition}

Corollary \ref{cor:inter} implies that the defining lines for a data point $p$ must pass through an additional data point.  For our dynamic algorithms, it makes sense to keep track of all possible half-planes which could be new defining half-planes after an insertion or deletion, i.e., all half-planes whose boundary includes $p$ and an additional data point.

\begin{definition}
The {\em meaningful half-planes} are the half-planes whose bounding line passes through two data points.  Let $p$ be a data point, then the meaningful half-planes with respect to $p$ are all of the meaningful half-planes whose boundary includes $p$.  Note that there are $2(n-1)$ meaningful half-planes with respect to $p$ since every line determines two half-planes.
\end{definition}

Since is is not enough to use lines to keep track of meaningful half-planes, we choose to represent half-planes as (unit) vectors or, equivalently, the points on the unit circle centered at $p$ where the unit vectors terminate.  The half-plane {\em associated with a vector $v$} centered at $p$ is the half-plane consisting of all data points to the {\em right of the vector}, i.e., all points $q$ such that the clockwise angle between $v$ and the vector $v_q$, the vector from $p$ to $q$, is less than or equal to $\pi$, see Figures \ref{fig:vectors}(b) and \ref{fig:trans}

\begin{figure}[hbt]
\begin{center}
$\begin{array}{c@{\hspace{0.6in}}c}
\epsfysize=1.7in
\epsffile{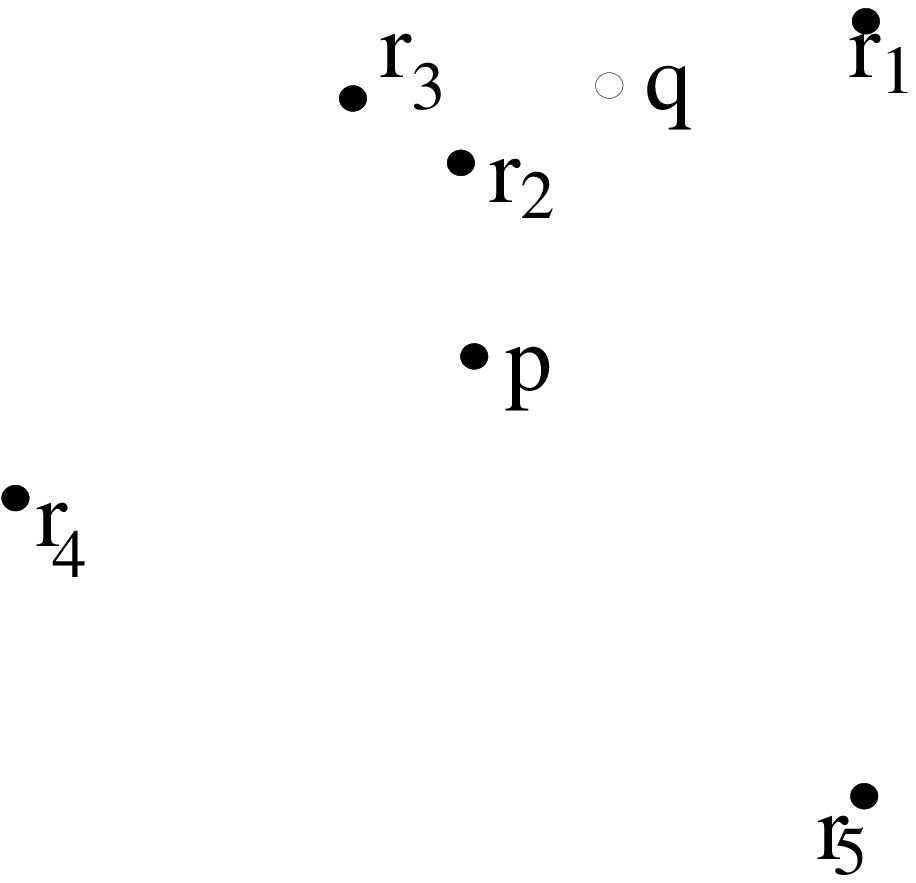} &
    \epsfysize=1.7in
    \epsffile{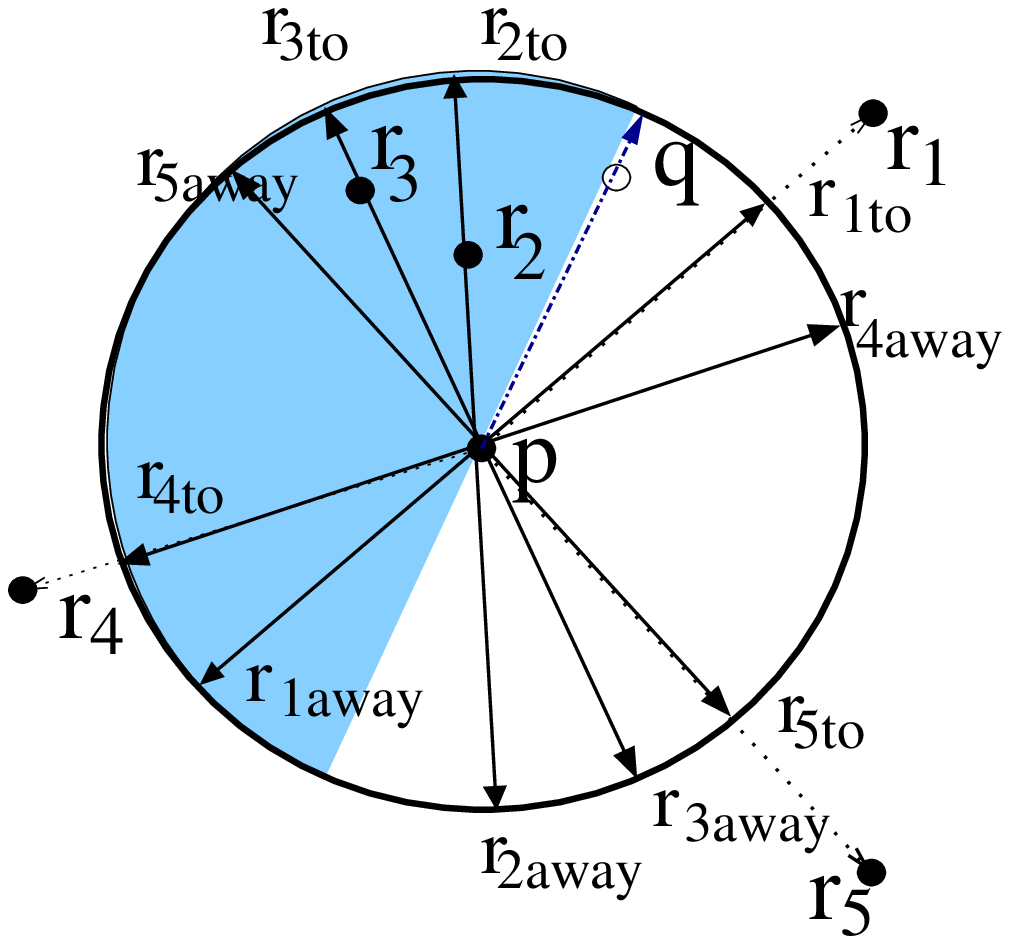} \\
\mbox{(a)} & \mbox{(b)}
\end{array}$
\end{center}
\caption{The transformation of the meaningful half-planes with respect to $p$: (a) Point $q$ is being inserted into (or deleted from) the data set.  (b) The vectors represent the two half-planes defined by each data point and $p$.  The half-plane associated with a vector consists of all points to its right so that the shaded region represents the set of half-planes which contain $q$.  In particular, the shaded region represents the set of half-planes whose number of included points increases (decreases) when $q$ is inserted into (deleted from) the data set.  On the other hand, $q$, $r_1$, and $r_5$ are in the half-plane associated with the vector pointing towards $q$.}\label{fig:trans}
\end{figure}

%%%%%%%%%%%%%%%%%%%%%%%%%%%%%%%%%%%%%%%%%%%%%%%%%%%%%%
\section{Dynamic Maintenance the Depth of Data Points}\label{sec:dynamicdepthpoints}
%%%%%%%%%%%%%%%%%%%%%%%%%%%%%%%%%%%%%%%%%%%%%%%%%%%%%%
In this section, we present an algorithm to maintain dynamically the half-space depth of a data point in $O(\log n)$ time per insertion or deletion and linear space overall.  This is an important step in the maintenance of the rank-based contours because these contours are based on the depths of the data points.  One of the simplest $O(n\log n)$ algorithms for computing the half-space depth of a point $p$, not necessarily a data point, is the following algorithm \citep{Aloupis:LowerBounds}:

First, sort all of the data points radially around $p$.  Next, use two pointers to step through this list.  The two pointers are separated by an angle of $\pi$ to represent a half-plane passing through $p$.  Finally, keep track of the smallest fraction of data points between the two pointers; this value, minus $1/n$, is the half-space depth of $p$ (the $-1/n$ appears as in Corollary \ref{cor:definition}).

In this section, we turn this static algorithm into a dynamic algorithm.  The static version of the data structure that we use was originally presented by \citep{Kreveld:Regression} for a different problem.

\subsection{The Angular Transformation}
The first step in the static algorithm above is to sort the data points radially around data point $p$.  In the dynamic algorithm, we instead need to maintain the sorted list of data points.  In the algorithm above, we needed two pointers to find the depth of $p$; therefore, our data structure will maintain all of the meaningful half-planes with respect to $p$.  We represent the half-planes as unit vectors originating at $p$, as described in Section \ref{sec:terminology}.  Since every data point, other than $p$, defines two meaningful half-planes, we label the vector pointing towards the point ``to'' and the vector pointing away from the point ``away,'' see Figures \ref{fig:vectors}(b) and \ref{fig:trans}.

The advantage of this representation is that the updates to the count of points in meaningful half-planes after an insertion or deletion are relatively simple to describe.  In fact, when a point $q$ is inserted into (deleted from) the data set, the only half-planes that need to be adjusted are those in the semicircle counter-clockwise from $q_{\text{to}}$ to $q_{\text{away}}$, see Figure \ref{fig:vectors}(b).  In the case of a circle, this semicircle is an arc and can be used to update the number of points in the half-planes in sub-linear time.

\subsection{Data Structure for Updating the Half-Space Depth of a Data Point}

In order to recompute the depth of point $p$ after an insertion or deletion, first, the number of data points in every meaningful half-plane with respect to $p$ has to be recomputed.  Then, the minimum number of data points in these meaningful half-planes must be computed.  If this number is $k+1$, then the depth of $p$ will be $k/(n\pm 1)$ where the $\pm$ corresponds to the insertion or deletion of a data point.  To recompute the depth of $p$ efficiently, all meaningful half-planes with respect to $p$ are updated simultaneously by increasing or decreasing the number of points associated with all affected half-planes by one.

We require a data structure that maintains all of the meaningful half-planes for a given point $p$ and which allows insertions and deletions in $O(\log n)$ time.  The static version of this data structure was originally presented in \citep{Kreveld:Regression} for a different problem.  We begin by describing the operations which are necessary for a data structure to solve this problem and then discuss our particular implementation (in Section 4, we will augment this data structure in order to maintain more of the local structure of the depth contours).

Let \texttt{T} be the data structure representing the set of meaningful half-planes with respect to $p$.  Each meaningful half-plane is labeled by a non-negative integer \texttt{Depth}, which is the number of data points in the half-plane.

\begin{itemize}
\item  \texttt{Increment(T, left, right), Decrement(T, left, right)} - Given a counter-clockwise circle segment defined by its two end-directions (\texttt{left} and \texttt{right}), increment or decrement the \texttt{Depth} of all vectors within the circle segment.
\item \texttt{Insert(T, $q'$, $k'$, to)} - Given a new vector $q'$, the number of data points $k'$ in the half-plane corresponding to $q'$, and a boolean value \texttt{to} recording if $q'$ points towards a data point, update the data structure to include $q'$.
\item \texttt{Remove(T, $q'$)} - Given a pointer to an existing vector $q'$, delete $q'$ from the data structure.
\item \texttt{MinDepth(T)} - Return the minimum number of points in a meaningful half-plane associated with $p$.
\end{itemize}

\subsubsection{Detailed Operations}

In our implementation, the data structure is an augmented dynamic balanced binary search tree representing the meaningful half-planes associated with $p$.  The leaves of this tree correspond to the meaningful half-planes and therefore the height of the tree will be $O(\log n)$.  Each meaningful half-plane will be represented in a leaf by its corresponding point on the unit circle in the range $[0,2\pi)$, i.e., the counter-clockwise angle from the positive horizontal vector to the vector associated with the half-plane.  The leaves of the tree are sorted by their direction in the interval $[0,2\pi)$.

The leaves of the tree are augmented with three fields: the \texttt{direction}, \texttt{Depth}, and \texttt{to} fields.  The \texttt{direction} is the value in $[0,2\pi)$ corresponding to the vector associated with the meaningful half-plane.  The \texttt{Depth} is the number of data points in the represented meaningful half-plane.  The \texttt{to} is a boolean value that is true if the vector corresponding to the meaningful half-plane points towards a data point and false if it points away.  The internal nodes of the tree are augmented with three fields: the \texttt{minDepth}, \texttt{maxDepth}, and \texttt{subtree-addition} fields.  The \texttt{minDepth} and \texttt{maxDepth} fields of a node contain the minimum and maximum \texttt{Depth} values of all leaves in its subtree.  The \texttt{subtree-addition} value is a number by which all of the leaves of the node should be incremented or decremented.  The actual updates are deferred until a traversal of this tree is necessary.

To perform an \texttt{Increment(T, left, right)} or \texttt{Decrement(T, left, right)} the \texttt{Depth} values for leaves or the \texttt{subtree-addition} values for internal nodes are incremented or decremented to represent the change in all leaves in a counter-clockwise direction from \texttt{left} to \texttt{right}.  This can be performed by traversing the paths from the leftmost leaf to the rightmost leaf of the interval from \texttt{left} to \texttt{right}.  While traversing these paths, the appropriate \texttt{Depth} or \texttt{subtree-addition} values of the children of this path between \texttt{left} and \texttt{right} are incremented or decremented.  In addition, the \texttt{minDepth} and \texttt{maxDepth} fields of all ancestors of $q'$ may be adjusted as well, if necessary.

An \texttt{Insert(T, $q'$, $k'$, to)} or \texttt{Remove(T, $q'$)} can be performed with a standard dynamic binary tree insert or delete along with an adjustment of the \texttt{minDepth} and \texttt{maxDepth} fields of the ancestors of $q'$, if necessary.  This adjustment can be performed by traversing a path from the root of the tree to $q'$.  The \texttt{MinDepth(T)} can be performed by examining the root of the tree: the \texttt{minDepth} value of the root maintains the half-space depth of $p$, i.e., the half-space depth of $p$ is $(\texttt{minDepth}(\text{root})-1)/n$.

Using these operations, an insertion or deletion of a data point $q$ is straightforward.  Let $T$ be the tree and $q_{to}$ be the vector from $p$ pointing towards $q$ and assume that the angle it makes counter-clockwise from the positive horizontal vector is $\gamma$.  In addition, let $q_{away}$ be the vector from $p$ pointing away from $q$.  The half-planes that contain $q$ are those with direction in the interval $[\gamma,\gamma+\pi]$ (or $[\gamma,2\pi)\cup[0,\gamma-\pi]$ if $\gamma\geq\pi$).  To insert $q$, first perform \texttt{Increment(T,$\gamma$,$\gamma+\pi$)} (or the two increments on the intervals described above) and then perform \texttt{Insert(T, $q_{to}$, $k$, true)} and \texttt{Insert(T, $q_{away}$, $n-k+2$, false)}, where $k$ is the number of points in the half-plane corresponding to $q_{to}$ and can be computed easily by considering the \texttt{Depth} and \texttt{to} fields of a neighbor of $q_{to}$ in the tree.  To delete $q$, first perform \texttt{Decrement(T,$\gamma$,$\gamma+\pi$)} (or the two decrements on the intervals described above) and then perform \texttt{Remove(T, $q_{to}$)} and \texttt{Remove(T, $q_{away}$)}.

The tree, which stores the $2(n-1)$ vectors corresponding to meaningful half-planes with respect to $p$, has linear structure and all operations require logarithmic time.  Since only a single region is incremented or decremented, the dynamic tree on the set of vectors allows associated values to be updated in logarithmic time.

\begin{theorem}
An algorithm exists that can dynamically maintain the half-space depth of a data point relative to a dynamic set of $n$ points in $O(\log n)$ time per operation and linear space.
\end{theorem}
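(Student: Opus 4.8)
The plan is to prove the theorem by verifying that the augmented balanced binary search tree \texttt{T} described above correctly encodes the half-space depth of $p$, and then that each primitive operation runs in $O(\log n)$ time; the stated bound will follow by expressing an insertion or deletion of a data point as a constant number of these primitives. First I would establish correctness of the representation. By Corollary \ref{cor:definition}, computing the half-space depth of the data point $p$ requires minimizing $\#(H\cap\mathcal{F}_n)$ only over half-planes $H$ whose boundary passes through $p$ and one additional data point, i.e., exactly the meaningful half-planes with respect to $p$. Each such half-plane is stored in a leaf whose \texttt{Depth} field equals $\#(H\cap\mathcal{F}_n)$, so the minimum of the \texttt{Depth} fields is the minimizing count $k+1$, and the depth of $p$ equals $(\texttt{minDepth}(\text{root})-1)/n$, where the $-1$ absorbs the $-(d-1)=-1$ term of Corollary \ref{cor:definition} in the plane. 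It then remains to confirm that the \texttt{Increment}/\texttt{Decrement} ranges track exactly which counts change under an update: a meaningful half-plane, represented by the vector $v$ centered at $p$, contains $q$ precisely when $q$ lies to the right of $v$, and the set of such $v$ is exactly the counter-clockwise arc $[\gamma,\gamma+\pi]$ from $q_{\text{to}}$ to $q_{\text{away}}$ (Figures \ref{fig:vectors}(b) and \ref{fig:trans}). Hence a single range increment (decrement) over this arc adjusts precisely the half-planes that gain (lose) $q$.

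Next I would bound each operation. The \texttt{MinDepth} query reads the root's \texttt{minDepth} field in $O(1)$ time. For \texttt{Increment}/\texttt{Decrement}, the arc $[\gamma,\gamma+\pi]$ decomposes, along the two root-to-leaf search paths for its endpoints, into $O(\log n)$ maximal subtrees lying entirely inside the arc; to each such subtree I would add or subtract one by updating its \texttt{subtree-addition} field, and then recompute the \texttt{minDepth}/\texttt{maxDepth} fields bottom-up along the two bounding paths, which touches $O(\log n)$ nodes at $O(1)$ work each. For \texttt{Insert}/\texttt{Remove}, a standard balanced-tree insertion or deletion touches $O(\log n)$ nodes, and I would restore the augmented fields by recomputing \texttt{minDepth}/\texttt{maxDepth} from the children along the affected path. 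The main obstacle is the interaction between the deferred \texttt{subtree-addition} values and rebalancing: before performing any rotation I would push pending additions down to the two children, so that each rotated node's \texttt{minDepth}/\texttt{maxDepth} can be recomputed from its now up-to-date children in $O(1)$ time; this invariant keeps all aggregates consistent while preserving the $O(\log n)$ bound. Once it is checked, every primitive runs in logarithmic time.

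Finally I would assemble the data-point update and the space bound. To insert $q$, the algorithm performs \texttt{Increment} on $[\gamma,\gamma+\pi]$ (or on the two arcs $[\gamma,2\pi)$ and $[0,\gamma-\pi]$ when $\gamma\geq\pi$), reads the count $k$ of the new half-plane for $q_{\text{to}}$ from an in-order neighbor in $O(\log n)$ time, and then calls \texttt{Insert} twice, once for $q_{\text{to}}$ with count $k$ and once for $q_{\text{away}}$ with count $n-k+2$; deletion is symmetric, using \texttt{Decrement} together with two \texttt{Remove} calls. Each data-point update therefore issues $O(1)$ primitives, for a total of $O(\log n)$ time per operation. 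For the space bound, \texttt{T} stores the $2(n-1)$ meaningful half-planes with respect to $p$ in its leaves and $O(n)$ internal nodes, each carrying a constant number of fields, so the overall space is $O(n)$. Combining the correctness of the representation with the per-operation and space bounds proves the theorem.
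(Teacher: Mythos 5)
Your proposal is correct and follows essentially the same approach as the paper: it uses the same augmented balanced search tree over the meaningful half-planes, the same \texttt{Increment}/\texttt{Decrement}, \texttt{Insert}/\texttt{Remove}, and \texttt{MinDepth} primitives, and the same assembly of a point update into $O(1)$ such primitives with the depth read off as $(\texttt{minDepth}(\text{root})-1)/n$. In fact your write-up is more careful than the paper's one-line proof, notably in handling the interaction between the deferred \texttt{subtree-addition} values and rebalancing rotations, a detail the paper leaves implicit.
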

\begin{proof}
To maintain the half-space depth of a single point $p$ the \texttt{minDepth} field of the root needs to be considered after every update.  The depth of $p$ is $(\texttt{minDepth}(root)-1)/n$.
\end{proof}

%%%%%%%%%%%%%%%%%%%%%%%%%%%%%%%%%%%%%%%%%%%%%%%%%%%%%%
\section{Dynamically Updating Local Cover Contours}\label{sec:localcontours}
%%%%%%%%%%%%%%%%%%%%%%%%%%%%%%%%%%%%%%%%%%%%%%%%%%%%%%
In this section, we augment the data structure from Section 3 to maintain the defining half-planes for the point $p$.  These half-planes provide a witness to the depth of $p$.  In addition, knowing the defining half-planes for $p$ provides a constant time answer for the depth change query: if $q$ is inserted into or removed from the data set, will the half-space depth of $p$ change?  We begin by discussing six properties of the half-space depth cover contours and meaningful half-planes with respect to the data point $p$.  After this, we analyze various cases of insertion or deletion with respect to the point $p$.  These cases prove Theorem \ref{thm:only-one}, a theorem of independent interest, that shows that defining half-planes cannot change too much after an insertion or deletion.  Since this section deals with the defining half-planes, which are not defined for degenerate contours, all of our data points must lie on a non-degenerate contour both {\em before} and {\em after} an update.  At the end of this section, we describe a test which can determine if a point is on a degenerate contour: it amounts to failing the counting property, Property \ref{prop:counting}, below.

\subsection{Properties of the Dynamic Contours}
\label{sec:properties}
The transformation in Section 3 allows us to present 6 important structural properties for the meaningful half-planes.  These properties are the keys to our algorithm for updating the defining half-planes of individual points.  Because these properties are technical, we will fix some notation in this section.  The {\em angle of a vector} used to represent a half-plane is measured counterclockwise from the positive $x$-axis.  The notation $E_{l_1}$ will represent the {\em cover-based contour edge} that lies on the line $l_1$.  A half-plane will be written $H_l$ where $l$ is the {\em boundary line of this half-plane}; although this notation is not unique, the half-plane chosen will be clear from context.  A vector will be written as $v_{H_l}$, where $H_l$ is the half-plane represented by the vector, or as $q_{to}$ and $q_{away}$ when the {\em vectors are associated with the data point $q$}, see Figures \ref{fig:contours2}(a) and \ref{fig:trans}.  For these properties, w.l.o.g., assume that $p$ is the origin.
\begin{property}
{\bf Convexity property:}
\label{prop:convexity}
This property simplifies the following arguments and notation by fixing a standard orientation for a data point and its associated contour: Every half-space depth contour is convex, constraining the angle between the two vectors associated with the defining half-planes, $H_{l_1}$ and $H_{l_2}$.
If edge $E_{l_2}$ is adjacent to $E_{l_1}$ clockwise along the contour of depth $k/n$ with $v_{H_{l_1}}$ pointing towards the positive $x$-axis and if the interior of the contour lies immediately above the
positive $x$-axis, then $v_{H_{l_2}}$ points into the lower semicircle and
the angle of $v_{H_{l_2}}$ is $\phi\in(\pi,2\pi)$, see Figure \ref{fig:contours2}(a).
\end{property}
\begin{property}
\label{prop:insertion}
{\bf Insertion Property:}
This property describes the updates needed as points are added to or deleted from the data set: When a data point $q$ is added to (deleted from) the data set, the number of data points increase (decrease) by one in all half-planes that contain $q$: those whose vectors lie in counter-clockwise interval from $q_{to}$ to $q_{away}$.  If $q_{to}$ has angle $\gamma$ then the affected half-planes have angles $[\gamma,\gamma+\pi]$, see Figure \ref{fig:vectors}(b). (If $\gamma\geq\pi$, then, for the computations below, it is easier to write this interval as $[\gamma,\pi)\cup[0,\gamma-\pi]$.)
\end{property}
\begin{figure}[hbt]
\begin{center}
\begin{tabular}{c@{\hspace{.6in}}c}
\epsfysize=1.7in \epsffile{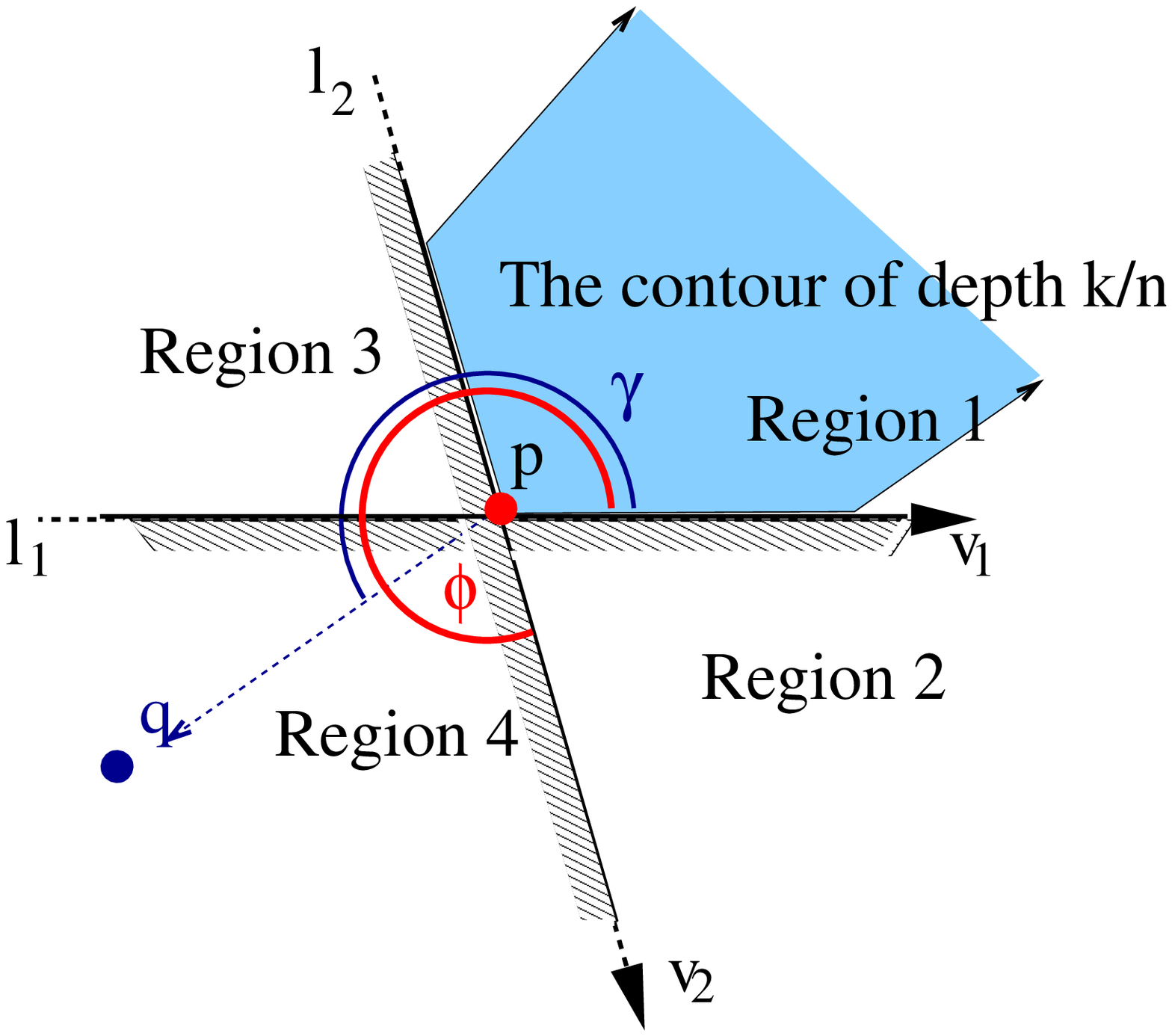}&\epsfysize=1.7in \epsffile{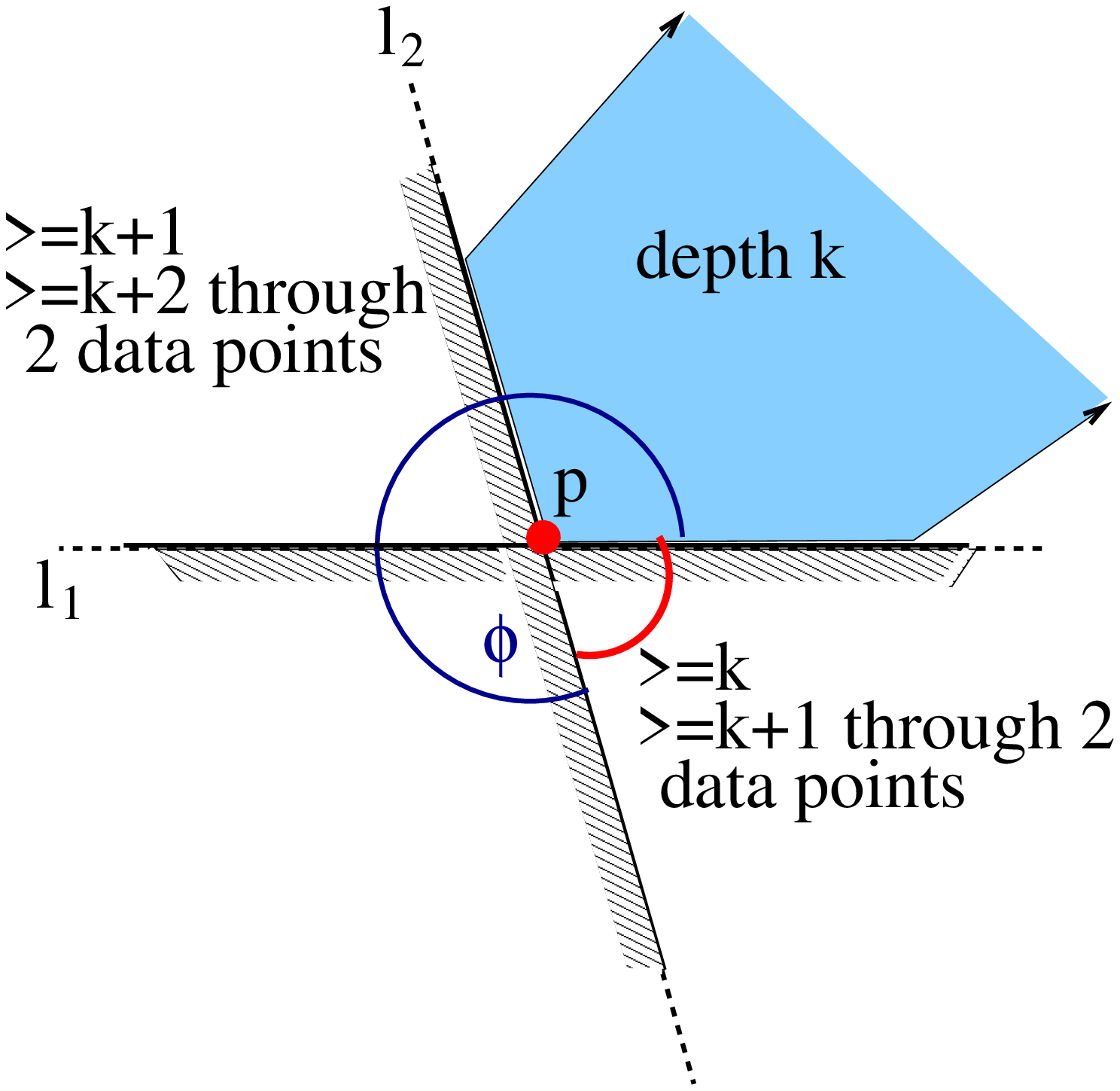}\\
(a)&(b)
\end{tabular}
\end{center}
\caption{(a) The convexity property: The four regions defined by the defining lines for $p$, and the angle to $q$.  The vectors $v_1$ and $v_2$ are the vectors associated with the defining half-planes, and $l_1$ and $l_2$ are the defining lines.  In addition, this contour is in the standard position by the convexity property since $v_1$ points along the positive $x$-axis and $v_2$ points below the $x$-axis. (b) The counting property: The number of points in regions described in the counting and converse counting properties.  For example: If the vector associated with a half-plane points into the interval $(0,\phi)$, then the half-plane contains at least $k+1$ data points.  If the vector associated with a half-plane points into the interval $(\phi,2\pi)$ and its bounding line passes through two data points, then the half-plane also contains at least $k+1$ data points.}
\label{fig:contours2}\label{fig:counting2}
\end{figure}
\begin{property}
{\bf Counting Property:}
\label{prop:counting}
This property describes the minimal number of data points in any half-plane with respect to the local depth contour at a data point: Let $H_m$ be a half-plane where $m$ passes through data point $p$ with half-space depth $k/n$.  Then, if $H_m$ intersects the interior of the cover-based contour of depth $k/n$, then it must contain $>k$ points and if it does not intersect the interior of the cover-based contour, then it must contain $\geq k$ points, see Figure \ref{fig:counting2}(b).  In particular, assume that the defining half-planes for $p$ are $H_{l_1}$ and $H_{l_2}$ and that they satisfy the orientation with angle $\phi$ as described in the convexity property, Property \ref{prop:convexity}.  For any unit vector $v_{H_m}$ with angle $\beta$, consider the number of data points in its the associated half-plane.
\begin{itemize}
\item If $0<\beta<\phi$, then the half-plane $H_m$ intersects the interior of the cover-based contour of depth $k/n$ and must contain $>k$ data points; if $m$ passes through $p$ and a second data point, then $H_m$ must contain $>k+1$ data points. Otherwise (for both cases) there would exist a point (witnessed possibly via a rotation) in the interior of the contour of depth $k/n$ with depth less than $k/n$.
\item If $0=\beta$ or $\phi=\beta$ then $H_m$ is a defining half-plane for $p$ and contains exactly $k+1$ points.
\item If $\phi<\beta<2\pi$, then $H_m$ contains $\geq k$ data points; if $m$ passes through $p$ and another data point, then $H_m$ contains $\geq k+1$ data points (otherwise $p$ would have depth $<k/n$).
\end{itemize}
\end{property}

\begin{property}
{\bf Converse counting property:}
\label{prop:reverse-counting}
This property shows that if the half-planes whose bounding line passes through the data point $p$ follow the conclusions of the counting property, then the depth of $p$ must be $k/n$: Assume that there is a continuous region on the unit circle, $B$, containing vectors whose associated half-planes have $\geq k+1$ data points; the half-planes defined by $p$ and another data point whose vectors are in $B$ contain $\geq k+2$ points.  The half-planes whose associated vectors are on the boundary of $B$ must have 2 points on their bounding line and their half-planes contain exactly $k+1$ points.  The half-planes whose vectors are in the complementary region $B^C$ contain $\geq k$ points; the half-planes defined by $p$ and another data point whose vectors are in $B^C$ contain $\geq k+1$ points.  Then point $p$ must have depth $k/n$.  Since every half-plane contains $k$ points, the half-space depth of $p$ is at least $k/n$.  Let $H_{m}$ be one of the half-planes forming the boundary of $B$ and $q$ the data point not equal to $p$ on $m$.  By a rotation of $H_m$, a second half-plane can be found which contains all data points except for $q$.  This new half-plane will therefore contain $p$ and $k$ data points.  Therefore, the depth of $p$ is $k/n$, see Figure \ref{fig:counting2}(b).
\end{property}
\begin{property}
{\bf Location property:}
\label{prop:location}
Given two candidate defining half-planes, i.e., ones which contain the correct number of data points, this property describes the possible positions for the defining half-planes: Given two arbitrary half-planes $H_{m_1}$, $H_{m_2}$ whose bounding lines each pass through point $p$ (of depth $k/n$) and one other data point and which each contain $k+1$ data points, then the vectors for the defining half-planes are restricted to lie in two small intervals, see Figure \ref{fig:counting3}(a).  Assume $v_{H_{m_1}}$ points towards the positive $x$-axis and $v_{H_{m_2}}$ points into the lower semicircle with angle $\beta$.  By the converse counting property, Property \ref{prop:reverse-counting}, the angles of the associated vectors of the {\em defining half-planes} must be in the interval $[0,\beta]$.  Alternately, if an associated vector for a defining half-plane makes an angle in the interval $[\beta-\pi,\pi]$, then the union of this half-plane, $H_{m_1}$, and $H_{m_2}$ is the entire plane, contradicting the assumption that $p$ is on a non-trivial contour.  Therefore, the angles of the associated vectors for the defining half-planes must come from $[0,\beta-\pi)$ and $(\pi,\beta]$, one vector from each interval because, by the counting property, Property \ref{prop:counting}, $H_{m_1}$ and $H_{m_2}$ must lie in Region 2 from Figure \ref{fig:contours2}(a).
\end{property}

\begin{figure}[hbt]
\begin{center}
\begin{tabular}{c@{\hspace{.6in}}c}
\epsfysize=1.7in
\epsffile{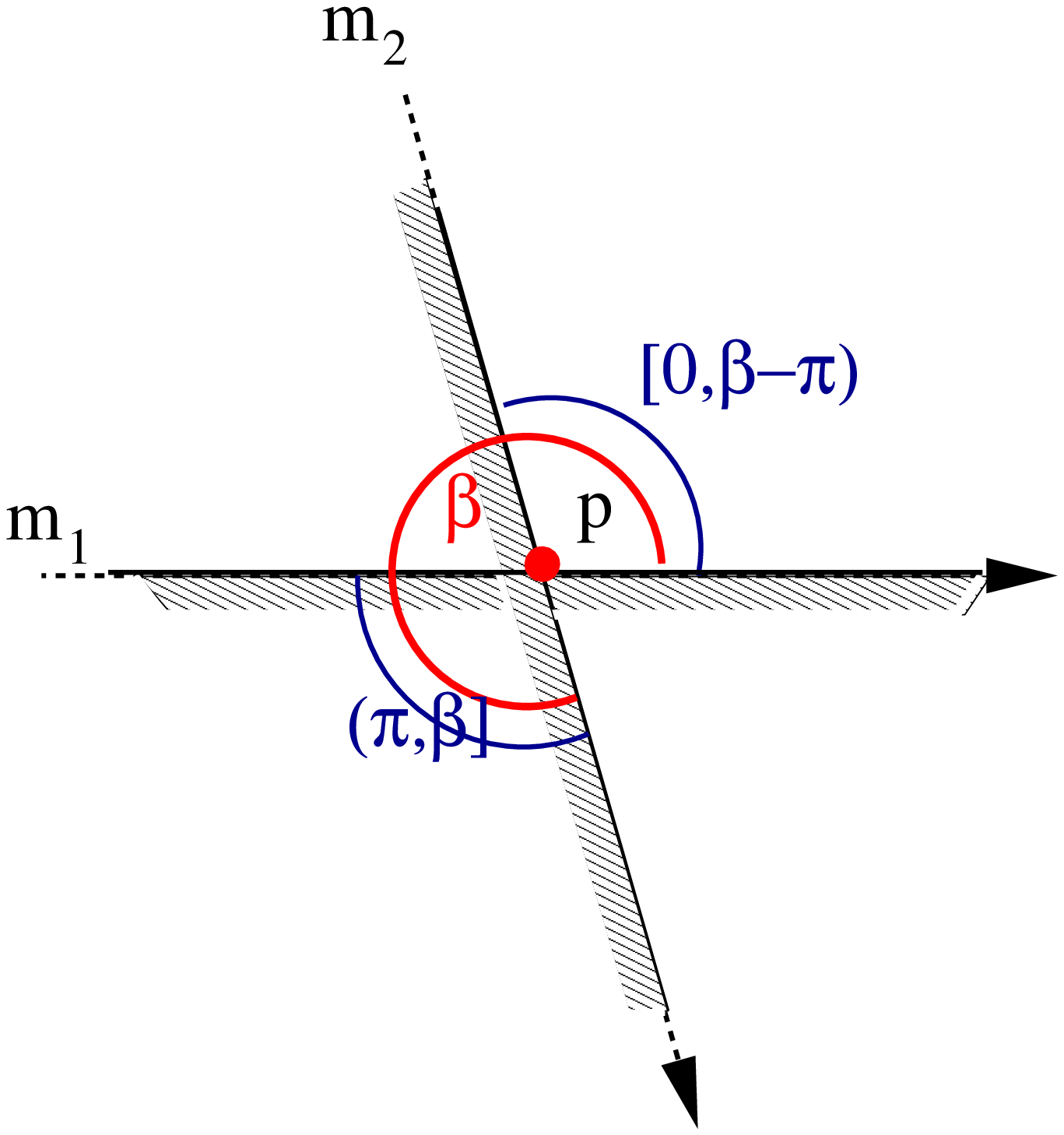}&
\epsfysize=1.7in
\epsffile{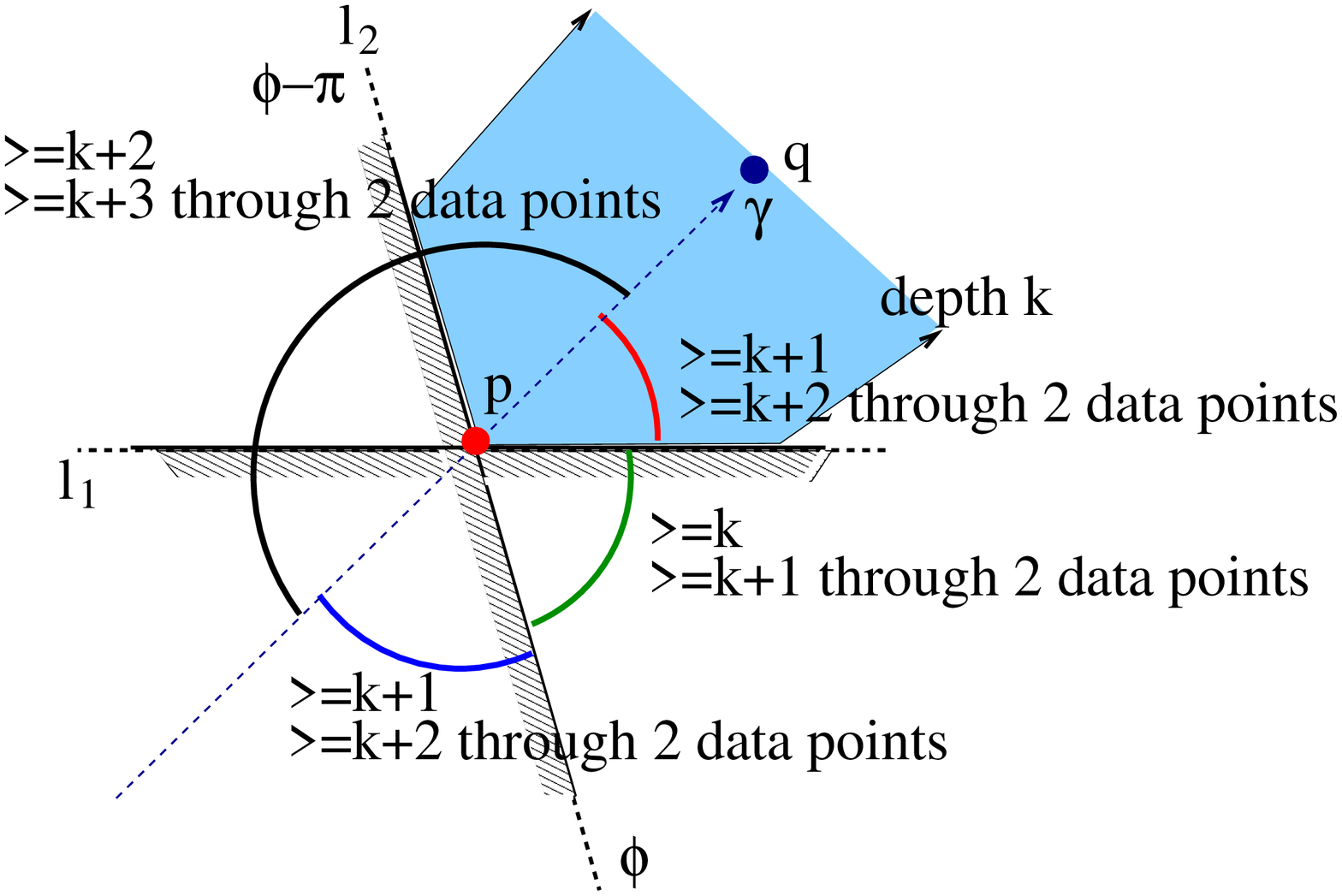}\\
(a)&(b)
\end{tabular}
\end{center}
\caption{(a) The location property: The two half-planes $H_{m_1}$ and $H_{m_2}$ are candidate defining half-planes because they include the correct number of data points.  The actual defining half-planes must lie in the two smaller intervals $[0,\beta-\pi)$ and $(\pi,\beta]$ (one in each) to ensure that the counting property, Property \ref{prop:counting}, remains valid. (b) Case 1: The counts of all half-planes when point $q$ is inserted into region 1.  The contour shown is the old contour of depth $k/n$.  The depth of $p$ and the defining half-planes for $p$ remain unchanged by the counting property, Property \ref{prop:counting}.}
\label{fig:counting3}\label{fig:caseone}
\end{figure}

\begin{property}
\label{prop:nestedness}
{\bf Nestedness property:}
This property shows that the contour regions can only expand when a point is added to the data set: When a point $q$ is added to the data set, the half-space depth cover-based contour of depth $k/n$ relative to $\mathcal{F}_n$ is nested in the half-space depth cover-based contour of depth $k/(n+1)$ relative to $\mathcal{F}_n \cup \{q \}$.  When $q$ is added, the depth of all half-planes either increases or remains the same.  Thus, the depth of every point either increases or remains the same.  Assume $p$ lies on a nontrivial contour with defining half-planes $H_{l_1}$ and $H_{l_2}$ and the depth of $p$ remains unchanged when $q$ is inserted.  In addition, assume that the defining half-planes for $p$ satisfy the orientation described in the convexity property, Property \ref{prop:convexity}.  Then, the angles of the associated vectors for the defining half-planes after $q$ is inserted must be in the interval $\{0\}\cup[\phi,2\pi)$.
\end{property}
\subsection{Possible Cases}
\label{sec:cases}
Consider a data set $\mathcal{F}_n$ before and after an update step, where point $q$ is either inserted into or deleted from the data set.  Assume point $p$ has depth $k/n$ and lies on a nontrivial contour
before and after the update.  In addition, assume that the defining half-planes for $p$ with respect to $\mathcal{F}_n$ are $H_{l_1}$ and $H_{l_2}$ satisfying the orientation described in the convexity property, Property \ref{prop:convexity}.  Let the two defining half-planes for $p$ after the update (in $\mathcal{F}_n\cup\{q\}$ or $\mathcal{F}_n\setminus\{q\}$) be $H_{m_1}$ and $H_{m_2}$.
In this case, lines $l_1$ and $l_2$ divide the plane into 4 regions, see Figure \ref{fig:contours2}(a).
Call region  $(H_{l_1})^C\cap(H_{l_2})^C$ {\em region 1}.
Call $H_{l_1}\cap (H_{l_2})^C$ {\em region 2} and  $(H_{l_1})^C \cap H_{l_2}$ {\em region 3}.  Call $H_{l_1}\cap H_{l_2}$ {\em region 4}.
Let $v=q_{to}$ be the unit vector from $p$ towards $q$ and let $\gamma$ be $v$'s angle.

Consider the possible cases created by inserting $q$ relative to the four regions.  Point $q$ can be inserted only into the interior of one of the four regions (inserting into the boundary will violate general position) and deleted from the interior of any of the four regions or a defining edge.
We first present an analysis of the cases created when a point $q$ is inserted into the set.

\begin{case}
{\bf Point $q$ is inserted into region 1 (see Figure \ref{fig:caseone}(b)):}
$q$ is inserted into neither of the defining half-planes $H_{l_1}, H_{l_2}$, so $0<\gamma<\phi-\pi$.
Since the number of data points contained in $H_{l_1}$ and $H_{l_2}$ does not change, the depth of $p$ remains $k/(n+1)$ by the converse counting property, Property \ref{prop:reverse-counting}.
By the nestedness property, Property \ref{prop:nestedness},
the new defining-half planes
must be in the interval $\{0\}\cup[\phi,2\pi)$. By the location property,
Property \ref{prop:location},
the defining half-planes must also be in the intervals $[0,\phi-\pi)$ and $(\pi,\phi]$.  The intersection of the two sets
is $\{0,\phi\}$.
Since after the update $p$ lies on a nontrivial contour, then $H_{m_1}$ and $H_{m_2}$ are distinct
so the defining half-planes for $p$ remain unchanged.
\end{case}
\begin{figure}[hbt]
\begin{center}
\epsfysize=1.7in
\epsffile{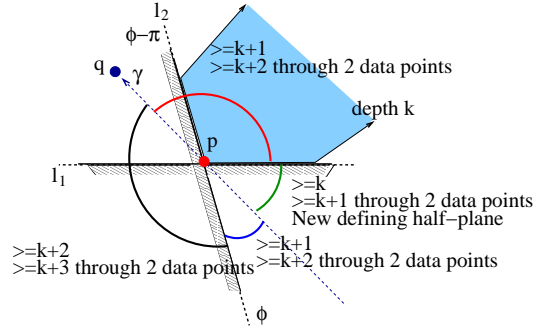}
\end{center}
\caption{Case 2: The counts of all half-planes when a point $q$ is inserted into region 3.  The contour shown is the old contour of depth $k/n$.  The depth of $p$ remains unchanged, but the old defining half-plane with angle $\phi$ is no longer a defining half-plane.  It is the first half-plane containing $k+1$ points and whose boundary passes through two data points in the indicated region when traversed clockwise from the $\gamma+\pi$ angle.}
\label{fig:casetwo}
\end{figure}
\begin{case}
\label{case:insert2}
{\bf Point $q$ is inserted into region 2 or 3 (see Figure \ref{fig:casetwo}):} $q$ is inserted into exactly one defining half-plane.
Assume $q$ is inserted into region 3, thus $H_{l_2}$ contains an extra point, $H_{l_1}$ does not contain an extra point, and $\phi-\pi<\gamma<\pi$.  By the insertion property, Property \ref{prop:insertion},
all half-planes with angles between $\gamma$ and $\gamma+\pi$ include an additional point.  $H_{l_l}$ will not contain an additional point, so the depth of $p$ will remain $k/(n+1)$, but $H_{l_2}$ will include an extra point so it can no longer be a defining half-plane for $p$ by the counting property, Property \ref{prop:counting}.
By the nestedness property, Property \ref{prop:nestedness},
the angle of vector $v_{H_{m_2}}$
must be in the interval $U_1  =\{0\}\cup[\phi,2\pi)$.  By the insertion property, Property \ref{prop:insertion},
every defining plane whose vector has an angle in the interval $U_2 = [\gamma,\gamma+\pi]$ includes an additional point, therefore, by combining the counting property, Property \ref{prop:counting}, the angle for $v_{H_{m_2}}$ must be in the interval $U_1 \setminus U_2 = [\gamma+\pi,2\pi]$.
By the counting property, Property \ref{prop:counting},
$v_{H_{m_2}}$ must be the first vector encountered when traversing the vectors starting from $v_{H_{l_2}}$ counter-clockwise, whose bounding line passes through two points and associated half-plane contains $k+1$ data points.

Assume $v_{H_{m_2}}$ has
angle $\beta$
and consider the location of the second defining half-plane $H_{m_1}$:
By the location property, Property \ref{prop:location},
the associated vectors to the new defining half-planes must be in the intervals $V_1=[0,\beta-\pi)$ and $V_2=(\pi,\beta]$, so $v_{H_{m_1}}$ must be in the interval $[0,\beta-\pi)$ since $v_{H_{m_2}}\in(\pi,\beta]$.  However, by the nestedness property, Property \ref{prop:nestedness},
the angles of the associated vectors for each of the defining half-plane must also be in $U_1$, but $U_1\cap V_1=\{0\}$.
Therefore, the associated vector for $H_{m_1}$ must have
angle $0$,
and $H_{m_1}=H_{l_1}$.
\end{case}

\begin{figure}[hbt]
\begin{center}
\epsfysize=1.7in
\epsffile{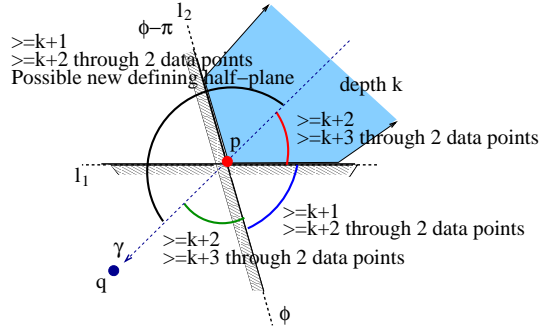}
\end{center}
\caption{Case 3: The counts all half-planes when a point $q$ is inserted into region 4.  The contour shown is the old contour of depth $k/n$.  The depth of $p$ increases by 1 and at most one half-plane can move into the indicated region.  The new defining half-plane will be the first half-plane whose angle is either counter-clockwise from $\pi$ or clockwise from $\phi-\pi$ which contains $k+1$ points and whose boundary passes through two data points.}\label{fig:casethree}
\end{figure}

\begin{case}
\label{case:insert4}
{\bf Point $q$ is inserted into region 4 (see Figure \ref{fig:casethree}):} $q$ is inserted into both defining half-planes, thus $\pi<\gamma<\phi$.  By the insertion property, Property \ref{prop:insertion}, every half-plane whose associated vector has angle in the interval $[\gamma, \gamma+\pi]$ gains an additional point.  Since $\gamma>\pi$, we safely replace this interval with $U=[\gamma,2\pi) \cup [0,\gamma-\pi]$.
Combining these two ideas, we note that, in particular, every half-plane in the interval $\{0\}\cup[\phi,2\pi)$ gains an additional point.  By the counting property, Property \ref{prop:counting}, every half-plane whose vector has angle in the complement of $U$ already contains at least $k+1$ data points and every half-plane whose vector has angle in $U$, which previously contained at least $k$ data points, now contains at least $k+1$.  Therefore the depth of $p$ increases to $(k+1)/(n+1)$.

The two defining half-planes of $p$ relative to $\mathcal{F}_n$, $H_{l_1}$ and $H_{l_2}$ contain $k+2$ data points each.  By the location property, Property \ref{prop:location}, the angles of the associated vectors $v_{H_{m_1}}, v_{H_{m_2}}$ must be in the intervals $[0,\phi-\pi)$ and $(\pi,\phi]$.  Assume that $H_{l_2}$ is no longer a defining half-plane for $p$, then the angle of $v_{H_{m_2}}$ is in the interval $(\pi,\phi)$.  By the counting property,
Property \ref{prop:counting}, combined with the insertion property, Property \ref{prop:insertion}, every half-plane whose vector has angle in $(\gamma,\phi)$ contains at least $k+2$ data points and whose bounding line passes through two data points must contain $k+3$ data points and cannot be a defining half-plane for $p$.  Thus, the angle for $v_{H_{m_2}}$ must be in the interval $(\pi,\gamma]$.  Assume $v_{H_{m_2}}$ has angle $\beta$.  By the counting property, Property \ref{prop:counting}, $v_{H_{m_2}}$ must be the first vector counter-clockwise from $\pi$ to $\gamma$ where $m_2$ passes through two data points and $H_{m_2}$ contains $k+2$ data points.

Now, assume that $H_{l_1}$ is also no longer a defining half-plane for $p$.  Then by the location property, Property \ref{prop:location}, the angle for $v_{m_1}$, must be in the interval $(0,\beta-\pi)$.  By construction, $\beta\leq\gamma$, so $\beta-\pi\leq\gamma-\pi$. By the counting property, Property \ref{prop:counting}, combined with the insertion property, Property \ref{prop:insertion}, every half-plane whose vector has angle in $(0,\gamma-\pi)$ contains at least $k+2$ data points and whose bounding line passes through two data points must contain $k+3$ data points and cannot be a defining half-plane for $p$.  However, since $(0,\beta-\pi)\subseteq(0,\gamma-\pi)$ there is a contradiction since the angle for $v_{m_1}$ must be in that interval.  Thus $v_{m_1}=v_{l_1}$ and $H_{m_1} = H_{l_1}$.

The case where $H_{l_1}$ is no longer a defining half-plane is handled similarly.
\end{case}

To describe the cases dealing with the removal of point $q$ we will use the following theorem of independent interest that shows that the defining half-planes cannot change by much after an update.

\begin{theorem}
\label{thm:only-one}
Let $p$ be a data point which lies on a nontrivial half-space depth contour with respect to the data set $\mathcal{F}_n$ and assume that after a point $q$ is inserted into (deleted from) $\mathcal{F}_n$, $p$ once again lies on a nontrivial contour.
Then at least one of the two defining half-planes of $p$ (with respect to $\mathcal{F}_n$) remains unchanged with respect to the new data set $\mathcal{F}_n \cup\{q\}$ ($\mathcal{F}_n \setminus\{q\}$).
\end{theorem}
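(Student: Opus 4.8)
The plan is to dispatch the insertion and deletion directions separately, leaning on the case analysis already carried out for insertion and on a time-reversal symmetry for deletion.

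For insertion, I would simply observe that the three cases above exhaust all the ways a point $q$ can enter the data set. Since the data set remains in general position, $q$ cannot be inserted on either defining line, so it lands in the interior of exactly one of the four regions cut out by $l_1$ and $l_2$ (Figure \ref{fig:contours2}(a)). If $q$ falls in region $1$, the analysis there shows both defining half-planes are preserved; if $q$ falls in region $2$ or $3$ (Case \ref{case:insert2}), exactly one of $H_{l_1}, H_{l_2}$ survives as a defining half-plane; and if $q$ falls in region $4$ (Case \ref{case:insert4}), the depth of $p$ increases by one but at least one of $H_{l_1}, H_{l_2}$ remains defining. In every case at least one old defining half-plane is still a defining half-plane after the insertion, which is precisely the insertion half of the statement.

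For deletion, the key idea is that deleting $q$ from $\mathcal{F}_n$ is the inverse of inserting $q$ into $\mathcal{F}_n \setminus \{q\}$, together with the observation that ``$H$ is a defining half-plane of $p$ with respect to a data set'' depends only on $p$ and that data set; hence the assertion ``$H$ is a defining half-plane of $p$ with respect to \emph{both} data sets $A$ and $B$'' is symmetric in $A$ and $B$. Concretely, I would set $A = \mathcal{F}_n \setminus \{q\}$ and regard $\mathcal{F}_n = A \cup \{q\}$ as the result of inserting $q$ into $A$. The hypotheses transfer cleanly: $p$ lies on a nontrivial contour with respect to $A = \mathcal{F}_n \setminus \{q\}$ and with respect to $A \cup \{q\} = \mathcal{F}_n$, which are exactly the ``before'' and ``after'' conditions required to apply the insertion direction to $A$. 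That application yields a half-plane $H$ that is a defining half-plane of $p$ both with respect to $A = \mathcal{F}_n \setminus \{q\}$ and with respect to $\mathcal{F}_n$. Reading this same $H$ from the point of view of $\mathcal{F}_n$, it is a defining half-plane of $p$ with respect to $\mathcal{F}_n$ that remains a defining half-plane with respect to $\mathcal{F}_n \setminus \{q\}$, which is exactly the deletion conclusion.

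I expect the only genuine content to be the insertion case analysis, and in particular the region-$4$ argument of Case \ref{case:insert4}, where the depth of $p$ actually changes and one must combine the nestedness property (Property \ref{prop:nestedness}), the location property (Property \ref{prop:location}), and the counting property (Property \ref{prop:counting}) to pin down which half-plane persists. The deletion direction, by contrast, should require no new geometry: the whole difficulty there is recognizing the reversibility of a single update and the symmetry of the ``common defining half-plane'' relation, after which the claim falls out of the insertion direction with no further computation. A minor point worth checking is that general position and the two nontriviality hypotheses line up correctly under the relabeling $A = \mathcal{F}_n \setminus \{q\}$, so that the insertion direction is genuinely applicable.
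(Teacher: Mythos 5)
Your proposal is correct and follows essentially the same route as the paper: insertion is settled by observing that the three insertion cases (regions $1$, $2$/$3$, and $4$) are exhaustive under general position and each preserves at least one defining half-plane, and deletion is reduced to insertion by reversing the update. The only difference is stylistic---the paper phrases the deletion step as a proof by contradiction (delete then reinsert; since reinsertion changes at most one defining half-plane, it could not restore both if deletion had changed both), whereas you argue directly via the symmetry of the ``shares a defining half-plane'' relation between $\mathcal{F}_n\setminus\{q\}$ and $\mathcal{F}_n$; both hinge on exactly the same insertion analysis.
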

\begin{proof}
By cases 1-3, when $q$ is inserted into $\mathcal{F}_n$, one of the defining half-planes remains unchanged so the theorem is proved for the cases of insertions.

The claim is true when $q$ is inserted, but assume, for contradiction, that it is false when $q$ is deleted.  Consider first deleting and then reinserting $q$.  After these two updates, the data set returns to its original state and the defining half-planes for $p$ must be unchanged.  Moreover, by above, the reinsertion of $q$ can only change one of the defining half-planes for $p$ with respect to $\mathcal{F}_n\setminus\{q\}$.  If the deletion of $q$ changed both defining half-planes, it is impossible for the reinsertion of $q$ to return the defining half-planes to their original state.  This is a contradiction and therefore the theorem is proved for the cases of deletions.
\end{proof}

We now analyze the cases where the point $q$ is deleted.  In particular, Theorem \ref{thm:only-one} is non-constructive for deletions, but the constructions themselves are necessary for the description of an algorithm.  For most of these cases, we determine the new defining half-planes by performing an insertion backwards and considering which type of insertion could result in the desired situation.

\begin{figure}[hbt]
\begin{center}
\epsfysize=1.7in
\epsffile{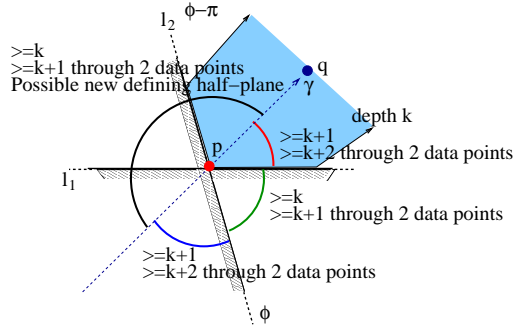}
\end{center}
\caption{Case 4: The counts of all half-planes when point $q$ is deleted from region 1.  The contour shown is the old contour of depth $k/n$.  The depth of $p$ remains unchanged and at most one half-plane will move into the indicated region.  The new defining half-plane will be the first half-plane whose angle is either clockwise from $\phi-\pi$ or counter-clockwise from $\pi$ which contains $k+1$ points and whose boundary passes through two data points.}
\label{fig:casefour}
\end{figure}

\begin{case}
{\bf Point $q$ is deleted from region 1 (see Figure \ref{fig:casefour}):}
$q$ is deleted from neither of the defining half-planes, thus $0<\gamma<\phi-\pi$.
Consider the counting property, Property \ref{prop:counting}, combined with the insertion property, Property \ref{prop:insertion}, and consider a half-plane $H_l$ whose associated vector $v_{H_l}$ has angle $\beta$.
If $0<\beta<\gamma$ or $\gamma+\pi<\beta<\phi$, then $q$ does not affect this half-plane and $H_l$ contains at least $k+1$ data points both before and after $q$ was removed and contains at least $k+2$ data points if its bounding line passes through two data points.
If $\gamma\leq\beta\leq\gamma+\pi$, then $q$ is contained in $H_l$ and the number of data points in $H_l$ decreases by one.  Before $q$ was deleted $H_l$ contained at least $k+1$ data points and contained at least $k+2$ data points if its bounding line passed through two data points.   After $q$ was deleted, there are at least $k$ points in $H_l$ and at least $k+1$ data points in a half-plane whose bounding line passes through two data points.
If $\phi<\beta<2\pi$, then $q$ does not affect this half-plane and $H_l$ contains at least $k$ data points and $k+1$ data points if its bounding line passes through two data points before and after the deletion.
Finally, if $\beta = 0$ or $\alpha$ ($H_l$'s associated vector $v_{H_l}$ is either $v_{H_1}$ or $v_{H_2}$), then $q$ does not affect this half-plane and $H_l$ contains exactly $k+1$ before and after the deletion.
The number of points in some half-planes has decreased, but the half-space depth of $p$ is still at least $k/(n-1)$ so the depth of $p$ remains unchanged.

The defining half-planes may change location. By the location property, Property \ref{prop:location}, the associated vectors to the new defining half-planes must be in the regions $[0,\phi-\pi)$ and $(\pi,\phi]$.  By the discussion above, we can shrink these regions to $(\gamma,\phi-\pi)\cup\{0\}$ and $(\pi,\gamma+\pi)\cup\{\phi\}$.  By the counting property, Property \ref{prop:counting}, to locate these vectors one needs to traverse the region from $\phi-\pi$ towards $\gamma$ and from $\pi$ towards $\gamma+\pi$ until the first vector whose associated half-plane passes through two data points and contains $k+1$ data points is located in each direction.  If none exist, then the defining half-planes remain unchanged.  By Theorem \ref{thm:only-one}, at most one defining half-plane will change its location.
\end{case}

\begin{figure}[hbt]
\begin{center}
\epsfysize=1.7in
\epsffile{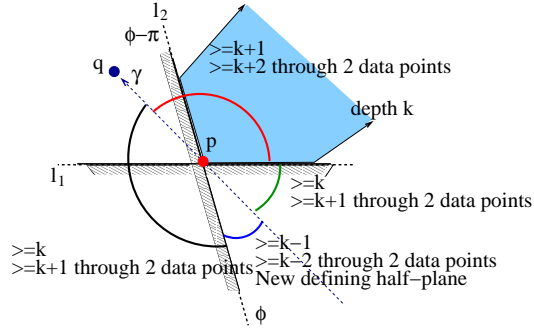}
\end{center}
\caption{Case 5: The counts of all half-planes when point $q$ is deleted from region 3.  The contour shown is the old contour of depth $k/n$.  The depth of $p$ decreases and the half-plane with angle 0 is no longer a defining half-plane.  It is the first half-plane containing $k$ points and whose boundary passes through two data points in the indicated region when traversed clockwise from $\gamma+\pi$ to $\phi$.}
\label{fig:casefive}
\end{figure}

\begin{case}
{\bf Point $q$ is deleted from region 2 or 3 (see Figure \ref{fig:casefive}:}
$q$ is deleted from one of the defining half-planes.
Assume w.l.o.g. that $q$ is deleted from region 3.
For the insertion of data point $q$ to end up in region 3 (or 2) it must be inserted into region 4 (Case \ref{case:insert4}).
Insertion into region 4 increases the depth of $p$ and may move one of the defining lines into the inside of the old contour.
Therefore a deletion will decrease the depth of $p$.  After the deletion, $H_{l_1}$ will still contain $k+1$ data points and have two data points on its boundary; also, $H_{l_2}$ will now contain only $k$ data points and have two data points on its boundary.  Therefore, $H_{l_1}$ can no longer be a defining half-plane for $p$.  In fact, since $q$ must now correspond to a point in region 4, $H_{l_1}$ must change to a half-plane containing $k$ points and whose boundary includes two points in the interval $(\phi,\gamma+\pi]$.  To locate this vector, traverse the region from $\gamma+\pi$ to $\phi$ in the clockwise direction to find the first vector whose associated half-plane satisfies these conditions.
\end{case}

\begin{figure}[hbt]
\begin{center}
\epsfysize=1.7in
\epsffile{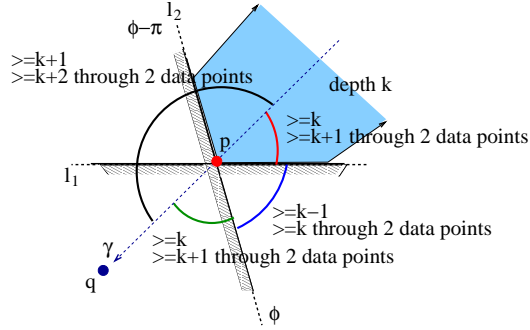}
\end{center}
\caption{Case 6: The counts of all half-planes when point $q$ is deleted from region 4.  The contour shown is the old contour of depth $k/n$.  The depth of $p$ decreases and the defining half-planes for $p$ remain unchanged by the counting property, Property \ref{prop:counting}.}
\label{fig:casesix}
\end{figure}

\begin{case}
{\bf Point $q$ is deleted from region 4 (see Figure \ref{fig:casesix}):}
$q$ is deleted from both defining half-planes $\pi<\gamma<\phi$.
Since both $H_{l_1}$ and $H_{l_2}$ now contain $k$ points and their bounding lines pass through two data points, the depth of $p$ is now $(k-1)/(n-1)$.  By the location property, Property \ref{prop:location}, the new defining half-planes must lie in the intervals $[0,\phi-\pi)$ and $(\pi,\phi]$, with one in each interval.  However, by the counting property, Property \ref{prop:counting}, every half-plane in the interval $(0,\phi)$ whose bounding line passes through two data points contains $\geq k+2$ data points.  After the deletion of $q$, these half-planes contain $\geq k+1$ data points.  Therefore, they cannot be defining half-planes and $H_{l_1}$ and $H_{l_2}$ must remain the defining half-planes.
\end{case}

\begin{figure}[hbt]
\begin{center}
\begin{tabular}{c@{\hspace{.2in}}c}
\epsfysize=1.7in
\epsffile{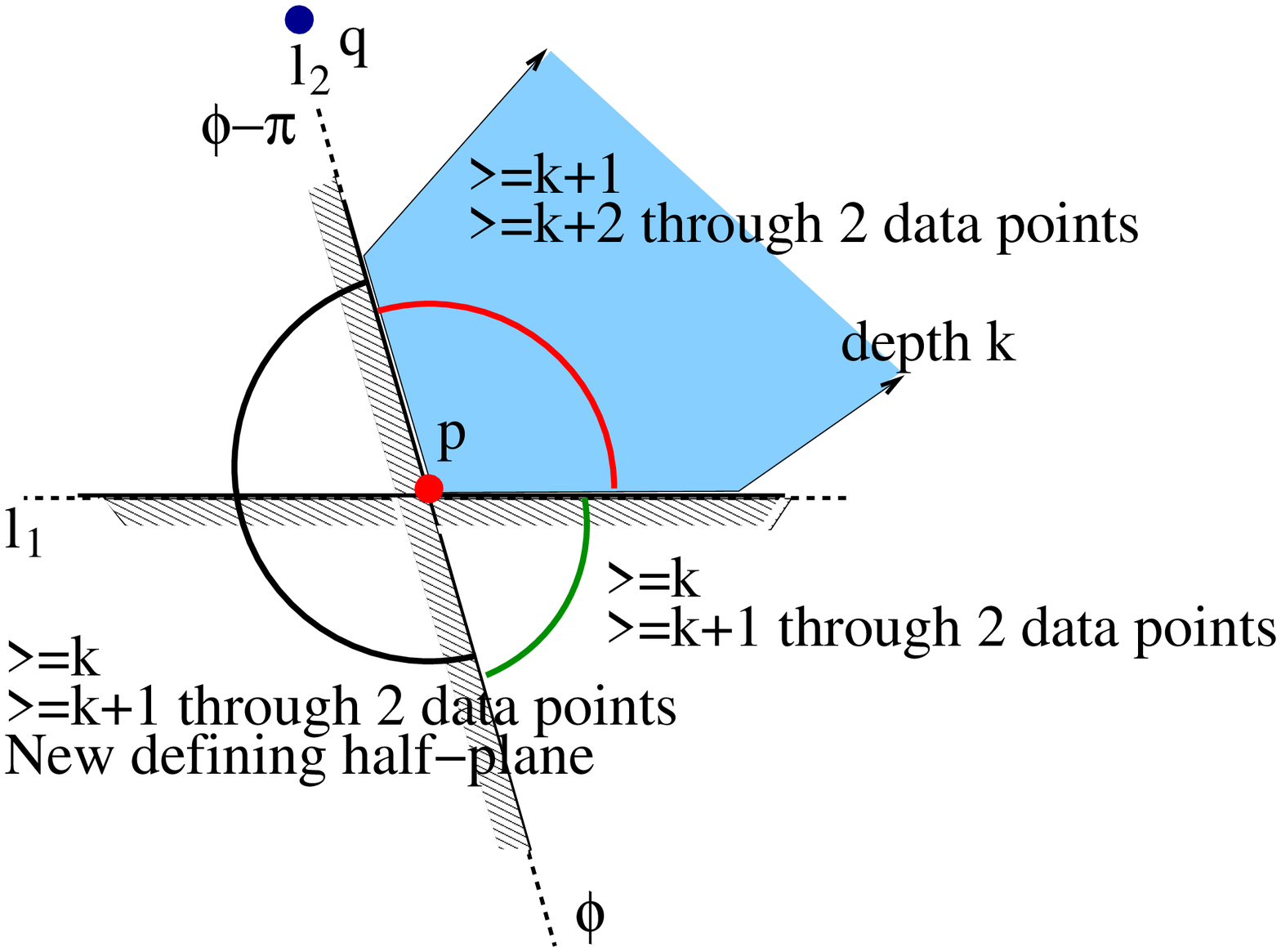}&
\epsfysize=1.7in
\epsffile{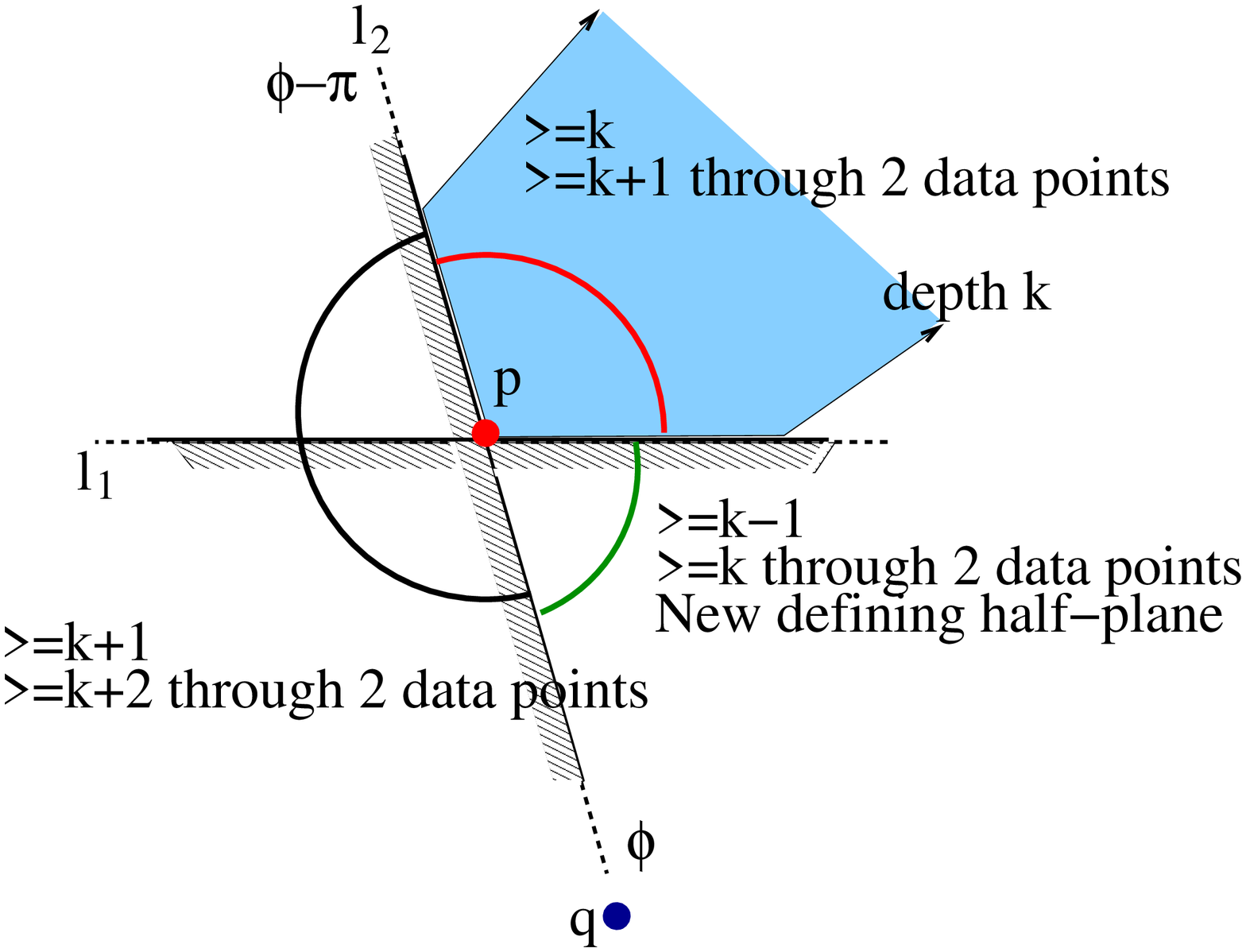}\\
(a)&(b)
\end{tabular}
\end{center}
\caption{Case 7: The counts of all half-planes when point $q$ is deleted from a defining half-plane.  (a) The case where $q$ is in the same direction as the defining edge and is deleted from the boundary between regions 1 and 3.  The half-space depth of $p$ remains unchanged, but the defining half-plane with angle $\phi$ is no longer a defining half-plane.  The new defining half-plane is the first half-plane containing $k+1$ data points and whose boundary passes through two data points in the counter-clockwise direction from $\pi$ to $\phi$. (b) The case where $q$ is in the opposite direction from the defining edge and is deleted from the boundary between regions 2 and 4.  The half-space depth of $p$ decreases, and the defining half-plane with angle $\phi$ is no longer a defining half-plane.  The new defining half-plane is the first half-plane containing $k+1$ data points and whose boundary passes through two data points in the counter-clockwise direction from $\phi$ to $2\pi$.}
\label{fig:caseseven}
\end{figure}

\begin{case}
{\bf Point $q$ is on a defining line (see Figure \ref{fig:caseseven})}
There are two possible cases when $q$ is on a defining line.  The two cases occur if $q$ is in the same direction from $p$ as the cover contour edge, see Figure \ref{fig:caseseven}(a), or if $q$ is in the opposite direction as the contour edge, see Figure \ref{fig:caseseven}(b).  If $q$ lies on the boundary between regions 1 and 2 or regions 1 and 3, then $q$ lies in the direction of the contour edge.  For the insertion of data point $q$ to end up in this position on a definition line, it must be inserted into region 2 or 3 (Case \ref{case:insert2}).  If $q$ lies on the boundary between regions 2 and 4 or regions 3 and 4, then $q$ lies in the opposite direction from the contour edge.  For the insertion of data point $q$ to end up in this position on a defining line, it must be inserted into region 4 (Case \ref{case:insert4}).  In either case, the half-plane including $q$ can no longer be a defining half-plane since its boundary line passes through only one data point.

In first case, assume w.l.o.g. that $q$ lies on the boundary between regions 1 and 3, see Figure \ref{fig:caseseven}(a).  Then $H_{l_2}$ is no longer a defining half-plane.  By undoing an insertion into region 3, Case \ref{case:insert2}, we can locate the new defining half-plane by searching in a counter-clockwise direction from $\pi$ to $\phi$ to find the first vector whose associated half-plane contains $k+1$ data points and whose boundary contains two data points.

In the second case, the depth of $p$ decreases, and assume w.l.o.g. that $q$ lies on the boundary between regions 2 and 4, see Figure \ref{fig:caseseven}(b).  Then $H_{l_2}$ is no longer a defining half-plane.  By undoing an insertion into region 4, Case \ref{case:insert4}, we can locate the new defining half-plane by searching in a counter-clockwise direction from $\phi$ to $2\pi$ to find the first vector whose associated half-plane contains $k$ data points and whose boundary contains two data points.
\end{case}

\subsection{Algorithm}
In this section, we augment the data structure presented in Section 3 to maintain not only the half-space depth of a point, but also the defining half-planes, provided that the data point is not on a degenerate contour.  By inspecting the arguments in Section \ref{sec:cases}, we find that all that is needed is a search ability to find new defining half-planes, this operation is described here.  Note that although all of the arguments in Section \ref{sec:cases} assumed that the current position of the half-planes obeyed the convexity property, Property \ref{prop:convexity}, it is straightforward to reinterpret and rotate the results for defining half-planes in arbitrary positions (e.g., a rotation is just a translation of the unit circle).

\begin{itemize}
\item \texttt{Next(T, $q$, $k$), Prev(T, $q$, $k$)} - Given a set of vectors \texttt{T} emanating from data point $p$, labeled with \texttt{Depth}, a vector $q$, and a depth value $k$, report the first vector in a counter-clockwise or clockwise direction, respectively, after $q$ of \texttt{Depth} $k$ (assuming such a vector exists).
\end{itemize}

\subsubsection{Detailed Operations}
\begin{itemize}
\item To perform a \texttt{Next(T, $q$, $k$)}, first locate the leaf $q$ by searching the tree, and, while searching the tree, update the \texttt{Depth}, \texttt{subtree-addition}, \texttt{minDepth}, and \texttt{maxDepth} fields of the children of the path as the tree is traversed.  Now, traverse this path in the reverse direction.  From bottom to top, every time there is a right child of the path that is not on the path, consider its \texttt{minDepth} and \texttt{maxDepth} values.  If $k$ is contained in this range, descend through this tree taking the path to the left child whenever k is included in the interval formed from its \texttt{minDepth} and \texttt{maxDepth} values and the right child otherwise.  If we reach the root of the tree, without finding appropriate \texttt{minDepth} and \texttt{maxDepth} values, let $left$ be the leftmost leaf in the tree and perform \texttt{Next(T, $left$, $k$)} (this step wraps around the break between $2\pi$ and $0$ in the unit circle).  To perform \text{Prev(T, $q$, $k$)}, perform the same operation as described above in \texttt{Next}, but with the roles of right and left reversed.
\end{itemize}

\begin{theorem}
An algorithm exists that can dynamically maintain the half-space depth of a data point and maintain its defining half-planes, if defined, relative to a dynamic set of $n$ points in $O(\log n)$ time per operation and linear space.
\end{theorem}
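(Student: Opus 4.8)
The plan is to build directly on the data structure of Section \ref{sec:dynamicdepthpoints}, augmenting it only with the search operations \texttt{Next} and \texttt{Prev}, and then to show that the case analysis of Section \ref{sec:cases} reduces the maintenance of the defining half-planes to a constant number of such searches per update. First I would verify that \texttt{Next(T,$q$,$k$)} and \texttt{Prev(T,$q$,$k$)} are correct and run in $O(\log n)$ time. Correctness follows because the subtrees hanging to the right of the search path to $q$ together contain precisely the leaves counter-clockwise of $q$; scanning these subtrees from the one nearest $q$ upward, and then guiding the descent by the \texttt{minDepth}/\texttt{maxDepth} ranges, locates the first (smallest-angle) leaf of \texttt{Depth} $k$ counter-clockwise of $q$, with a single wrap-around from the leftmost leaf handling the break between $2\pi$ and $0$. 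Since the operation performs one root-to-leaf descent to $q$, one bottom-to-top pass, one guided descent, and at most one wrap-around search, the running time is $O(\log n)$; \texttt{Prev} is symmetric.

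Next I would assemble the update procedure. After inserting or deleting $q$, the depth of $p$ is maintained exactly as in Section \ref{sec:dynamicdepthpoints} in $O(\log n)$ time. To maintain the defining half-planes, I would first rotate the stored frame so that the current defining half-planes obey the convexity property, Property \ref{prop:convexity} (a rotation is only a relabeling of angles and costs $O(1)$), and then compute the angle $\gamma$ of $q_{to}$ in $O(1)$ time to determine which of the four regions, or which defining line, contains $q$. This identifies which of Cases 1--7 applies. In each case the analysis of Section \ref{sec:cases} prescribes exactly where the new defining half-plane lies: a single interval, to be scanned counter-clockwise or clockwise, for the first vector whose associated half-plane has the prescribed count ($k$, $k+1$, or $k+2$, depending on the case). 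Since every leaf of \texttt{T} already represents a meaningful half-plane through $p$ and one other data point, the side condition that the bounding line pass through two data points holds automatically for every leaf, so each prescribed search is exactly one \texttt{Next} or \texttt{Prev} query for the appropriate \texttt{Depth} value. By Theorem \ref{thm:only-one}, at most one defining half-plane changes, so each update requires only a constant number of such queries, each costing $O(\log n)$.

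I would then address degeneracy. Because the defining half-planes are undefined on degenerate contours, before reporting them I would test whether $p$ lies on a nontrivial contour; as noted at the start of this section, this reduces to checking the counting property, Property \ref{prop:counting}---equivalently, verifying that the two located candidate half-planes are distinct and together leave a nonempty complementary region---which can be read off the search results in $O(1)$ time. If the test fails, the algorithm reports only the depth and flags $p$ as lying on a degenerate contour, and otherwise we are squarely in the regime of Cases 1--7, where $p$ is nondegenerate both before and after the update and the located half-planes are the genuine defining half-planes.

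Finally, the complexity bounds follow by summation: each update costs $O(\log n)$ for the depth maintenance plus a constant number of $O(\log n)$ searches, for $O(\log n)$ total, and the augmentation adds only $O(1)$ fields per node, preserving the linear space of Section \ref{sec:dynamicdepthpoints}. I expect the main obstacle to be the correctness argument for \texttt{Next} and \texttt{Prev} in the presence of the deferred \texttt{subtree-addition} values: one must be careful that the ranges consulted during the guided descent reflect the pending increments, which is why the procedure pushes these deferred values down along every path it traverses before consulting any \texttt{minDepth} or \texttt{maxDepth} field.
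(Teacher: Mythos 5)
Your proposal is correct and follows essentially the same route as the paper: the paper's own (one-sentence) proof likewise maintains the Section~3 structure augmented with \texttt{Next}/\texttt{Prev} and updates the defining half-planes via the case analysis of Section~4.2, which is exactly what you spell out in detail. Your expanded treatment of the $O(\log n)$ search correctness, the constant number of queries per case (justified by Theorem~4.1), and the degeneracy check matches the paper's intent, so there is no substantive difference in approach.
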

\begin{proof}
Maintain the structure presented in Section 3, augmented with the operations from this section and update the defining half-planes using the operations as described in Section \ref{sec:cases}
\end{proof}

\begin{theorem}
If $p$ lies on a non-degenerate contour and the defining half-planes for a point $p$ are known, then the new depth of $p$ can be determined when $q$ is either inserted into or deleted from the data set in $O(1)$ time.
\end{theorem}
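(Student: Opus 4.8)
The plan is to derive this statement as a direct corollary of the case analysis in Section \ref{sec:cases}: once the defining half-planes $H_{l_1}, H_{l_2}$ are known, the effect of an update on the depth of $p$ is completely determined by which of the four regions (or which defining line) the update point $q$ occupies, and that region can be identified with a constant number of orientation tests.

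First I would locate $q$. Placing $p$ at the origin and orienting the defining half-planes as in the convexity property, Property \ref{prop:convexity}, each defining line $l_i$ through $p$ is explicitly available, so a single sidedness (orientation) test decides whether $q\in H_{l_i}$ or $q\in (H_{l_i})^C$. Two such tests place $q$ in exactly one of regions 1--4 or detect that $q$ lies on a defining line; each test is $O(1)$, so the region is identified in $O(1)$ time. Note that for an insertion general position forbids $q$ from lying on a defining line, so the defining-line subcase arises only for deletions.

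Next I would read off the new depth from the cases. For an insertion, Cases 1--3 show that the numerator of the depth is unchanged unless $q$ falls into region 4 (the interior of both defining half-planes), in which case it rises by one; hence the new depth is $(k+1)/(n+1)$ when $q$ lies in region 4 and $k/(n+1)$ otherwise. For a deletion, Cases 4--7 give: region 1 leaves the numerator fixed, for depth $k/(n-1)$; regions 2, 3, or 4 force at least one defining half-plane down to $k$ points, for depth $(k-1)/(n-1)$; and if $q$ lies on a defining line, one further $O(1)$ sign test of whether $q$ lies along the contour-edge ray or the opposite ray distinguishes Case 7(a) (depth $k/(n-1)$) from Case 7(b) (depth $(k-1)/(n-1)$). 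In every case the new depth is a single arithmetic evaluation, so it is produced in $O(1)$ time after the $O(1)$ region test.

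The substance of the argument is not this bookkeeping but the guarantee that the region alone pins down the new minimum count over all meaningful half-planes, so that the depth can be emitted without any search for the new defining half-planes; this is exactly what Cases 1--7 establish, using the counting and converse-counting properties (Properties \ref{prop:counting} and \ref{prop:reverse-counting}) to certify that no half-plane outside the defining ones becomes the new minimum. I would therefore present the theorem simply as the observation that, since each case already computes the new depth explicitly and the region containing $q$ is identified by a constant number of orientation tests, the new depth is obtained in $O(1)$ time. The main obstacle I anticipate is the defining-line case: reconciling the two sub-possibilities of Case 7 with a single direction test, so that the constant-time claim remains rigorous precisely when $q$ lands on the boundary of a defining half-plane.
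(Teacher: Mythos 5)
Your proposal is correct and follows essentially the same route as the paper's own (very terse) proof: both reduce the theorem to the case analysis of Section \ref{sec:cases}, observing that a constant number of sidedness comparisons of $q$ against the known defining half-planes selects the applicable case, and that each case already dictates the new depth ($k$ unchanged, incremented, or decremented) without any search. Your version merely spells out the bookkeeping—including the Case 7 direction test for deletions on a defining line—that the paper leaves implicit.
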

\begin{proof}
The cases in Section \ref{sec:cases} describe all possibilities and can be decided in by comparing the point of interest to the defining half-planes.
\end{proof}

Thus far, we have avoided the degenerate case.  It is possible to check whether this case occurs using the \texttt{Next} and \texttt{Prev} operations.  We use the convexity property, Property \ref{prop:convexity}, and the counting property, Property \ref{prop:counting}, to note that if $p$ has depth $k/n$, then all meaningful half-planes with respect to $p$ containing $k+1$ data points and whose boundary contains $2$ data points must have associated vectors lying in a semi-circle.  One easy way to check this is as follows: Divide the unit circle into four intervals: $[0,\pi/2]$, $[\pi/2,\pi]$, $[\pi,3\pi/2]$, and $[3\pi/2,2\pi]$.  Search for the extreme meaningful half-planes in each interval containing $k$ data points.  Then $p$ is on a degenerate contour if and only if these vectors do not lie in a semi-circle.

%%%%%%%%%%%%%%%%%%%%%%%%%%%%%%%%%%%%%%%%%%%%%%%%%%%%%%
\section{Dynamically Updating Rank Contours}
%%%%%%%%%%%%%%%%%%%%%%%%%%%%%%%%%%%%%%%%%%%%%%%%%%%%%%
In this section, we keep a collection of data structures either from Section 3 or from Section 4 to maintain the depth of each point and then resort the data points according to their new depth.  (There is a choice of data structures because the structures from Section \ref{sec:dynamicdepthpoints} maintain the data necessary for this algorithm.  Since the structures from Section \ref{sec:localcontours} are augmented versions of this data structure, they also maintain the necessary data.)  We begin with a simple observation:

\begin{observation}
\label{obs:depth-change}
When a data point is inserted into or deleted from a data set, the value of $k$ in each point's depth, $k/n$, can increase or decrease by at most 1 step; i.e., for an insertion the new half-space depth is either $k/(n+1)$ or $(k+1)/(n+1)$ and for a deletion, the new half-space depth is either $k/(n-1)$ or $(k-1)/(n-1)$.
\end{observation}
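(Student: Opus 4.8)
The plan is to work with the \emph{count} form of half-space depth rather than the fraction form, so that an insertion or deletion changes a single, easily controlled quantity. For a point $x$ and a finite set $\mathcal{F}$, write $c(x,\mathcal{F})=\min\{\#(H\cap\mathcal{F})\mid x\in H\text{ a half-plane}\}$; by the definition of half-space depth this minimum equals $|\mathcal{F}|$ times the depth of $x$, so the depth of $x$ equals $k/n$ exactly when $c(x,\mathcal{F}_n)=k$. By Corollary \ref{cor:inter} (together with Proposition \ref{prop:inter}) it suffices to take the minimum over the finite collection of half-planes whose boundary passes through two data points, so the minimum is attained and $c$ is well defined. In this language the statement to prove is simply that $c(x,\cdot)$ changes by at most one in the appropriate direction: $c(x,\mathcal{F}_n\cup\{q\})\in\{k,k+1\}$ after an insertion and $c(x,\mathcal{F}_n\setminus\{q\})\in\{k-1,k\}$ after a deletion.

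First I would treat insertion. For the lower bound, observe that adding $q$ can only increase the count of any fixed half-plane: for every $H$ with $x\in H$ we have $\#(H\cap(\mathcal{F}_n\cup\{q\}))\ge\#(H\cap\mathcal{F}_n)\ge c(x,\mathcal{F}_n)=k$, and taking the minimum over $H$ gives $c(x,\mathcal{F}_n\cup\{q\})\ge k$. (This is exactly the monotonicity underlying the nestedness property, Property \ref{prop:nestedness}.) For the upper bound, let $H^*$ be a half-plane with $x\in H^*$ attaining $c(x,\mathcal{F}_n)=k$. Since only the single point $q$ has been added, $\#(H^*\cap(\mathcal{F}_n\cup\{q\}))\le k+1$, and because $H^*$ is one of the half-planes over which the new minimum is taken, $c(x,\mathcal{F}_n\cup\{q\})\le k+1$. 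Combining, $c(x,\mathcal{F}_n\cup\{q\})\in\{k,k+1\}$, i.e.\ the new depth is $k/(n+1)$ or $(k+1)/(n+1)$.

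For deletion I would invoke the insertion case in reverse, as in the proof of Theorem \ref{thm:only-one}. Suppose $x$ has depth $k/n$ in $\mathcal{F}_n$ and we delete $q$ to obtain $\mathcal{F}_n\setminus\{q\}$ of size $n-1$; let the new count be $c'=c(x,\mathcal{F}_n\setminus\{q\})$. Reinserting $q$ into $\mathcal{F}_n\setminus\{q\}$ returns the set to $\mathcal{F}_n$, and by the insertion case just proved this reinsertion sends $c'$ to a value in $\{c',c'+1\}$; but that value is $c(x,\mathcal{F}_n)=k$, forcing $c'\in\{k-1,k\}$, i.e.\ the new depth is $(k-1)/(n-1)$ or $k/(n-1)$. (Equivalently, one can argue directly: evaluating the old optimal half-plane on $\mathcal{F}_n\setminus\{q\}$ gives $c'\le k$, while $\#(H\cap(\mathcal{F}_n\setminus\{q\}))\ge\#(H\cap\mathcal{F}_n)-1\ge k-1$ for every $H$ gives $c'\ge k-1$.)

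There is no serious obstacle here; the entire content is that a single inserted or deleted point changes the cardinality of any one half-plane by at most one, and that a minimum of such quantities inherits the same one-step stability. The only points requiring a little care are bookkeeping ones: ensuring the minimum defining $c$ is attained (handled by reducing to the finitely many combinatorial half-planes via Corollary \ref{cor:inter}) and keeping the normalization straight when passing between the count form $c$ and the fraction form $k/n$ as $n$ changes to $n\pm 1$.
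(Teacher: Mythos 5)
Your proof is correct. Note that the paper offers no proof of Observation \ref{obs:depth-change} at all---it is stated as a ``simple observation'' and left as immediate---and your argument (each half-plane's count changes by at most one under a single insertion or deletion, and a minimum over a fixed family of half-planes inherits this one-step stability, with the deletion case reducible to insertion by the same delete-then-reinsert trick the paper uses for Theorem \ref{thm:only-one}) is precisely the reasoning the paper treats as self-evident.
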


The re-sorting step uses a list of buckets that hold all data points, where bucket $k$ holds the list of data points of depth $k/n$. The list contains buckets from $0$ to $\lfloor n/2\rfloor$, the maximum number of contours for the current data set.  When a new point is inserted or deleted, every point $q$ that changes its depth must be moved: for example, if a data point changes its half-space depth from $k'/n$ to $(k'+1)/(n+1)$, then it is removed from its old bucket ($k'$) and inserted into the new bucket ($k'+1$).  Observation \ref{obs:depth-change} guarantees that the depth of $q$ does not change by more then 1 and so the update takes constant time. The structure guarantees that the data points {\em sorted by depth} can be retrieved in linear time.

To retrieve data points by {\em $x$-coordinate}, a second copy of the list of data points, sorted according to their $x$-coordinate and augmented with pointers to the depth-sorted list, is maintained. Every point in the depth-sorted list is augmented with a pointer to its bucket, which allows a constant time retrieval of the depth of any point and a total running time that is linear in the set size.

\subsection{Algorithm}
Assume point $q$ is inserted into or deleted from the data set.
\begin{itemize}
\item Add or remove, if needed, a bucket to account for the new size of the data set.
\item For every point $p \in \mathcal{S}$
    \begin{itemize}
    \item Update the data structure associated with $p$ by {\em incrementing} or {\em decrementing} the number of points in each of the half-planes containing $q$
    \item If $q$ was inserted into the data set, update $p$'s data structure by \texttt{inserting} the two vectors associated with the meaningful half-planes determined by $p$ and $q$, setting the corresponding \texttt{to} fields according to the direction to $q$, and labeling their \texttt{Depth} as the depth associated with their \texttt{direction}.  The \texttt{Depth} can be computed by examining the neighboring half-planes' \texttt{Depth} and \texttt{to} fields.
    \item If $q$ was deleted from the data set, update $p$'s data structure by \texttt{Removing} the vectors associated with the meaningful half-planes determined by $q$.
    \item Update the depth of $p$ by considering the value of \texttt{minDepth} at the root of the tree.
    \item If the depth of $p$ changed, remove it from its bucket and place it in the adjacent bucket, according to its new \texttt{Depth}.
    \end{itemize}
\item If the data structure which was chosen for this algorithm is the one from Section 4 then, since this data structure also maintains the defining half-planes, the defining half-planes must be updated:
    \begin{itemize}
    \item Compare $q$ to the defining edges of $p$, and decide which of the cases described in Section \ref{sec:cases} applies. If needed, locate the new defining edges as described in the cases in Section \ref{sec:cases}.
    \item If $p$ is the unique point with maximum depth (maximum when compared to the other points), then use \texttt{Next} and \texttt{Prev} to determine if $p$ is on a degenerate contour as described in Section \ref{sec:cases}.  If $p$ is on a degenerate half-plane, update its defining half-planes to be \textsc{null}.
    \item If $p$ was on a degenerate contour, but is no longer on a degenerate contour, then use its data structure along with the converse counting property, Property \ref{prop:reverse-counting}, to determine its defining half-planes.
    \end{itemize}
\item If $q$ was inserted into the data set, build its data structure by sorting all vectors around $q$, assigning them the correct \texttt{Depths}, computing the half-space depth of $q$, and placing it in the correct bucket.
\item If $q$ was deleted from the data set, remove it from the correct bucket.
\end{itemize}

This data structure culminates in algorithms for maintaining the half-space depth of all points and the rank-based contours.

\begin{theorem}\label{thm:pointsort}
An algorithm exists that can compute the half-space depth ranking-order of a set of points dynamically in $O(n \log n)$ time per operation and quadratic space.  In addition, any rank-based contour can be reported in $O(n)$ time.
\end{theorem}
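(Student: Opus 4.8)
The plan is to maintain, simultaneously, one copy of the single-point data structure of Section \ref{sec:dynamicdepthpoints} for every data point, together with two auxiliary lists keeping the points sorted by depth and by $x$-coordinate. For each of the $n$ data points $p$ I would instantiate the augmented balanced binary search tree of Section \ref{sec:dynamicdepthpoints}, which stores the $2(n-1)$ meaningful half-planes with respect to $p$ and reports the depth of $p$ as $(\texttt{minDepth}(\text{root})-1)/n$. Since each tree occupies linear space, the collection of $n$ trees occupies $O(n^2)$ space, matching the claimed quadratic bound.

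When a point $q$ is inserted into or deleted from the data set, I would first update every existing point's tree. By the single-point algorithm of Section \ref{sec:dynamicdepthpoints}, each tree is updated by an \texttt{Increment}/\texttt{Decrement} over the affected semicircle together with the \texttt{Insert}/\texttt{Remove} of the two vectors $q_{to}$ and $q_{away}$, all in $O(\log n)$ time; doing this for all $n$ points costs $O(n\log n)$. When $q$ is inserted I would also build its own tree by radially sorting the other $n-1$ points around $q$ in $O(n\log n)$ time and computing the half-plane counts with a single angular sweep. Thus the per-operation time for maintaining the trees is $O(n\log n)$.

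To maintain the ranking-order itself I would keep a bucketed list in which bucket $k$ stores the data points of depth $k/n$ for $0\le k\le\lfloor n/2\rfloor$, with a pointer from each point to its bucket. The key is Observation \ref{obs:depth-change}: an update changes the integer $k$ for any point by at most one step, so a point that changes depth is moved to an adjacent bucket in $O(1)$ time, and scanning all points to perform these moves costs only $O(n)$, which is dominated by the tree updates. I would also maintain a second list of the points sorted by $x$-coordinate, cross-linked with the bucket structure so that the depth of any point is available in $O(1)$. To report the rank-based contour of the $\alpha$th central region, I would use the buckets to identify the deepest $n\alpha$ points, retrieve exactly those in $x$-sorted order by walking the $x$-coordinate list and testing each point's depth through its bucket pointer, and compute their convex hull by a monotone-chain sweep; since the selected points arrive already sorted by $x$-coordinate, the hull, and hence the contour, is produced in $O(n)$ time.

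The step I expect to be the main obstacle is not any single computation but the bookkeeping that keeps all three structures simultaneously consistent within the stated bounds: in particular, verifying that the re-sort by depth can be carried out in $O(n)$ rather than $O(n\log n)$ (which rests essentially on Observation \ref{obs:depth-change}) and that the contour report is genuinely linear (which rests on the auxiliary $x$-sorted list delivering the selected points already ordered, so that no $O(n\log n)$ sort is hidden in the convex-hull computation). Given the single-point result of Section \ref{sec:dynamicdepthpoints}, the per-tree updates are the dominant cost and are already established, so the remaining work is precisely this orchestration.
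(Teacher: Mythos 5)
Your proposal is correct and follows essentially the same route as the paper: one single-point structure from Section \ref{sec:dynamicdepthpoints} per data point (quadratic space, $O(n\log n)$ total tree-update time per operation), a bucket list exploiting Observation \ref{obs:depth-change} for $O(1)$ per-point re-ranking, a cross-linked $x$-sorted list, and a linear-time convex-hull scan (your monotone chain is the paper's Graham scan on $x$-sorted points) for reporting a contour in $O(n)$ time.
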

\begin{proof}
Note that we do not actually maintain the contours themselves, but a structure to report a contour efficiently.  Theorem \ref{thm:pointsort} implies that we can maintain the ranking-order of all points.  Since we also have kept the data points in order sorted by $x$-coordinate and the two structures linked, in linear time, we examine each bucket to select the appropriate number of deep data points and then run Graham-scan on these data points.  In this case, since the data points are already ordered by $x$-coordinate, the Graham-scan for the upper and lower convex hulls takes linear time, see \citep{Book:Berg:CompGeo}.
\end{proof}

%%%%%%%%%%%%%%%%%%%%%%%%%%%%%%%%%%%%%%%%%%%%%%%%%%%%%%
\section{Dynamically Updating Cover Contours}\label{sec:DynamicCover}
%%%%%%%%%%%%%%%%%%%%%%%%%%%%%%%%%%%%%%%%%%%%%%%%%%%%%%
We present an algorithm for computing the cover-based half-space depth contours of a set of points dynamically in $O(n\log^2n)$ time and overall quadratic space.
The algorithm is independent of the half-space depth and the rank-based contour algorithms.  This algorithm is based on the intersection of half-planes as described in Corollary \ref{cor:inter} and the relationship between half-plane intersections and convex hulls, see \citep[p.245]{Book:Berg:CompGeo}.  Our method is based on the dynamic computation of a convex hull, as described by Overmars and van Leeuwen \citep{Overmars:DynamicHull}.  Throughout this section, we {\em assume that no two data points have the same $x$-coordinate}.  If two points have the same $x$-coordinate, then either the entire data set can be rotated so that no two points have the same $x$-coordinate or, because the data set is in general position, a small (or infinitesimal) perturbation could be applied to one of these points.

\subsection{The Dual Arrangement}\label{sec:dualarrangement}
The algorithm is set in the dual space, see \citep{Book:Berg:CompGeo}: every point $p=(p_x,p_y)$ is mapped to the line $p^\ast$ with equation $y=p_xx-p_y$ and every line $l$ given by the equation $y=mx+b$ is mapped to the point $l^\ast$ with coordinates $(m,-b)$, see Figure \ref{fig:levelsone}.  The dual transformation operation is called $T$, and the dual transform has the following property: it preserves ordering, so if the point $p$ is above the line $l$, then the line $p^\ast$ is below the point $l^\ast$ by the same vertical distance.  This transformation is useful because the half-plane intersection problem is dual to the convex hull problem.  Corollary \ref{cor:inter} shows that in order to maintain the cover-based depth contour of depth $k/n$, we should intersect all half-planes containing $n-k+1$ data points and whose bounding line passes through two points.  We begin by dividing this intersection in two, by considering the subset of these half-planes that open upwards and the subset of these half-planes that open downwards.  In this discussion, we only consider the set of half-planes that open downwards because the set that open upwards can be treated similarly.  We know that the intersection of these half-planes corresponds to an upper convex hull problem in the dual.  We describe this hull next.

In the primal, each of the half-planes of interest is a half-plane that is above $n-k+1$ points and whose bounding line passes through two data points.  For such a half-plane, the bounding line is strictly above $n-k-1$ points.  A line that passes through two data points corresponds, in the dual, to the intersection point $h$ of two lines dual to data points.  Since the bounding line was strictly above $n-k-1$ data points, $h$ is a point strictly below $n-k-1$ lines dual to data points, see Figure \ref{fig:levelsone}.  Also, any intersection point of two dual lines that is strictly below $n-k-1$ dual lines corresponds to a half-plane containing $n-k+1$ data points, whose bounding line passes through two data points, and which opens downward.  Therefore, what is needed to compute the intersection of half-planes in the primal is the upper convex hull of these intersection points in the dual.

\begin{figure}[hbt]
\begin{center}
\begin{tabular}{m{1in}m{.2in}m{3in}}
\epsfysize=1.7in
\epsffile{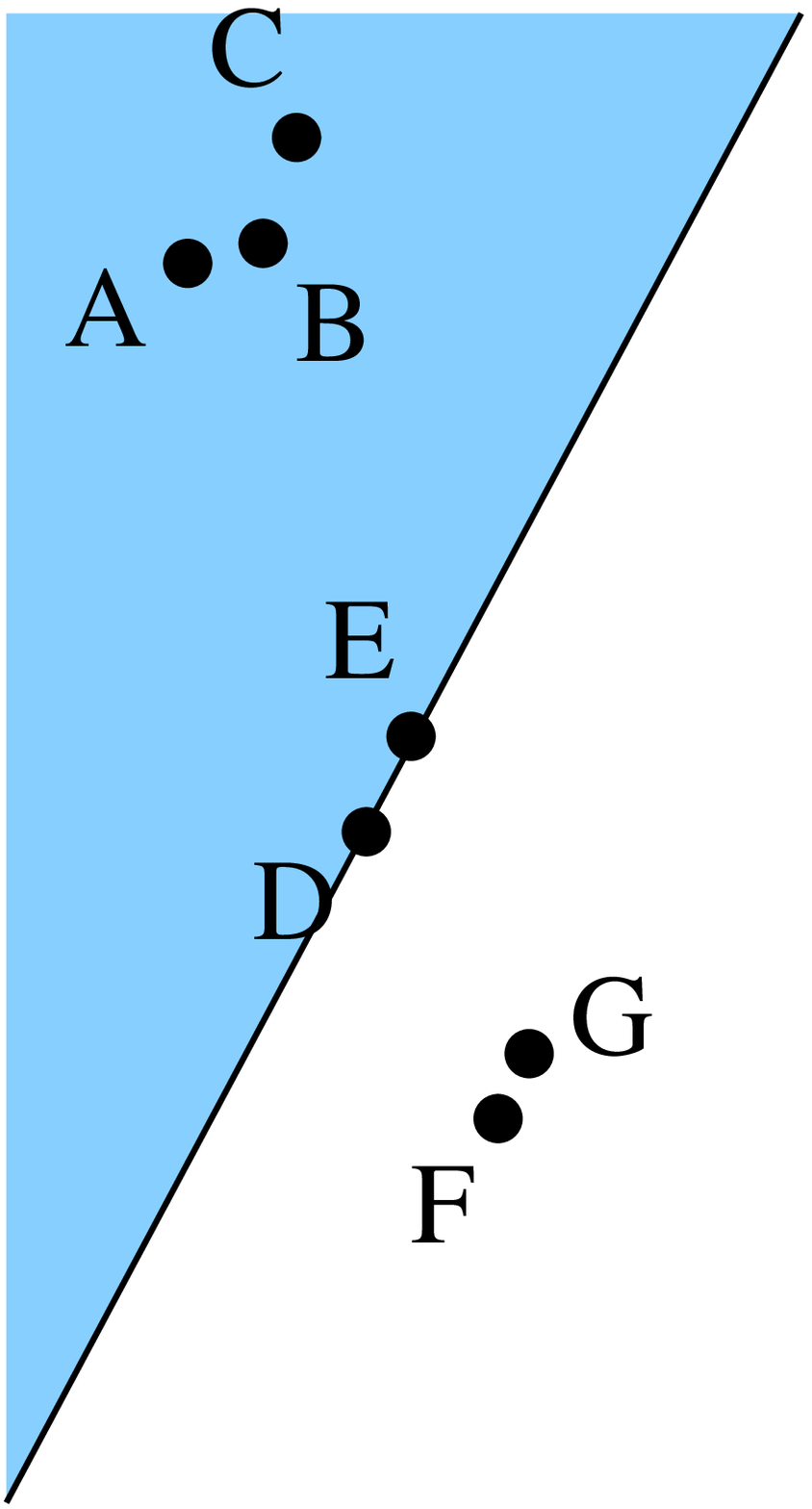}&
$\Leftrightarrow$&
\epsfysize=1.7in
\epsffile{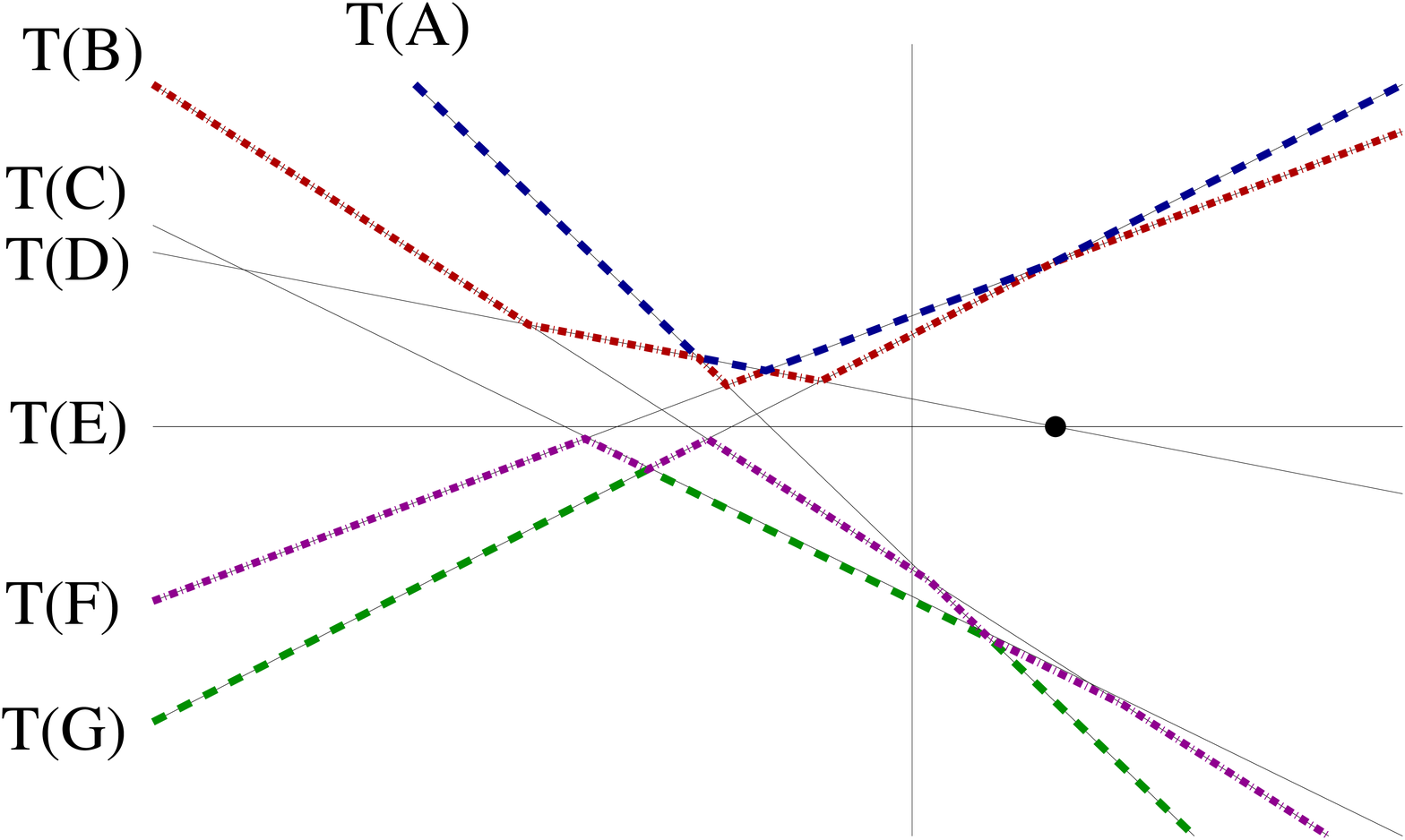}\\
\begin{center}(a)\end{center}&&\begin{center}(b)\end{center}
\end{tabular}
\end{center}
\caption{(a) A set of 7 points in the primal plane and (b) the corresponding set of lines in the dual plane under the dual transformation $T$ where the point $p=(a,b)$ is taken to the line $y=ax-b$.  In the dual plane, the dashed lines trace out the first and seventh levels (from top to bottom, respectively) while the dashed-dotted lines trace out the second and sixth levels (from top to bottom, respectively).  The highlighted line in the primal, which passes through points $D$ and $E$, defines two half-planes.  This line has three points strictly above it and two points strictly below it.  In addition, this line corresponds to dot at the intersection of $T(D)$ and $T(E)$ in the dual.  This dot has the two lines corresponding to $F$ and $G$ above it while it has the three lines corresponding to $A$, $B$, and $C$ below it.}
\label{fig:levelsone}
\end{figure}

In order to keep track of all of these intersection points with $n-k-1$ lines strictly above them, we look to the $(n-k+1)$th level of the dual arrangement (the levels are numbered starting with 1).  The $(n-k+1)$th level is the closure of the set of all segments that have $n-k$ lines above them, see Figure \ref{fig:levelsone}.  The level consists of segments and intersection points.  For any point on a segment, there are exactly $n-k$ lines above it.  For any intersection point, there are either $n-k$ or $n-k-1$ lines strictly above it.  Moreover, every intersection point with $n-k-1$ lines strictly above it appears on level $n-k+1$.  Finally, the upper hull of the intersection points on level $n-k+1$ is almost the upper hull of all intersection points with $n-k-1$ lines strictly above them.  The only caveat is that the endpoints, i.e., the first and last points on the hull could have $n-k$ lines above them.  This can be checked easily by considering the slopes of the two intersecting lines.  If the slopes of the segments on the level increases, then there are $n-k-1$ lines strictly above the intersection point and if the slope decreases, then there are $n-k$ lines strictly above the intersection point.  Therefore, our algorithm will maintain the upper hulls (and similarly, the lower hulls) of each level, which will determine the cover-based contours.

\subsection{Basic Data Structure}
The algorithm uses a data structure for updating the convex hull of a set of points dynamically in $O(\log^2 n)$ per update and in linear overall space from \citep{Overmars:DynamicHull,DobkinSouvaine:CompGeo}.  In particular, we maintain each level of the arrangement of lines in the dual in such a data structure.  The maintenance of the convex hull provides the contours as described above in Section \ref{sec:dualarrangement}.  In this section, we will call a connected component of an arrangement where the $x$-coordinates of the vertices are increasing a {\em chain}; a level of an arrangement is an example of a chain.  Our operations, given a data set, are as follows:

\begin{itemize}
\item $\texttt{Split(T,}p\texttt{)}$ - Given a chain $\texttt{T}$ from an arrangement and a data point $p$, compute the chains associated with the points to the left and right of $p$.  This operation was part of the original data structure of Overmars {\em et al.}
\item $\texttt{Join(T}_{\texttt{1}},\texttt{T}_{\texttt{2}}\texttt{,}p\texttt{)}$ - Given two chains $\texttt{T}_\texttt{1}$ and $\texttt{T}_\texttt{2}$ such that for every point $p_1\in\texttt{T}_\texttt{1}$ and $p_2\in\texttt{T}_\texttt{2}$, $xcoord(p_1)<xcoord(p)<xcoord(p_2)$.  Compute the chain that connects the two chains $\texttt{T}_\texttt{1}$ and $\texttt{T}_\texttt{2}$ via $p$.  This operation was not needed for the dynamic update of a single convex hull and cannot be used for joining general hulls, but, in this case, the ordering of vertices of the chains allows for this operation.
\end{itemize}

\subsubsection{Detailed Operations}
Our implementation uses a red-black tree augmented with the operations of a concatenable queue to enable additional {\em dynamic operations} to the data structure not required by the dynamic convex hull original algorithm:
\begin{itemize}
\item A $\texttt{Split(T,}p\texttt{)}$ is performed in $O(\log^2n)$ time as in Overmars {\em et al.}:  Start from the root and descend down the search path towards node $p$, in each node the partial hull is disassembled and reassembled at its 2 children.  After the hull has been split, the tree itself can be split and each portion can be rebalanced by local changes.
\item A $\texttt{Join(T}_{\texttt{1}},\texttt{T}_{\texttt{2}}\texttt{,}p\texttt{)}$ is performed in $O(\log^n)$ time.  If the black-node height of tree $\texttt{T}_\texttt{2}$ is less than the black-node height of the tree $\texttt{T}_\texttt{1}$.  Then $\texttt{T}_\texttt{2}$ is added as a subtree of $\texttt{T}_\texttt{1}$.  After the trees have been joined, the partial hulls along the path of the join are then rebuilt.
\end{itemize}
\subsection{Algorithm}
Our algorithm computes the contours by computing the upper and lower convex hulls of the intersection points along the levels. Each intersection point is defined using the two lines of the arrangement that create the intersection.
Two dynamic trees maintain the upper and lower convex hulls for every level of the arrangement, $2n$ trees for all levels and total of $O(n^2)$ space, the size of the entire arrangement.

When a new point $q$ is inserted into or deleted from the data set, some of the dual levels, the ones intersected by the line $T(q)$, and consequently some of the convex hulls of these levels, are affected. We compute the new convex hulls by updating the convex hull trees.
\subsubsection{Insertions}
Given a data point $q$ (and its dual $T(q)$), a set of $n$ points $\mathcal{F}_n$, and its associated data structures, we wish to recompute the half-space depth contours of $\mathcal{F}_n \cup \{q\}$.
After each insertion, each {\it old} level $k$, intersected by $T(q)$, splits between levels $k$ and $k+1$.  Therefore, the new level $k$ becomes a combination of chains from old level $k$ that remain in level $k$ plus chains from old level $k-1$ that increase level to level $k$ minus chains from old level $k$ that increase level to level $k+1$.  In every stitching point between these chains, a segment of $T(q)$ is added.  Not all levels are necessarily updated, only those intersected by $T(q)$, i.e., the old levels between $k'+1$ and $n-k'$ are affected, where the half-space depth of $q$ is $k'/n$ before the insertion.
Since $T(q)$ intersects the arrangement exactly $n$ times, $O(n)$ chains move between all trees and $n$ segments from $T(q)$ are added.

To update the contours properly, we need to stitch the chains correctly.
The stitching is done by splitting the dynamic trees representing the chains and merging the subtrees, while maintaining a correct representation of the convex hull on the root of each tree.

\begin{figure}[hbt]
\begin{center}
\begin{tabular}{cc}
\epsfysize=1.4in
\epsffile{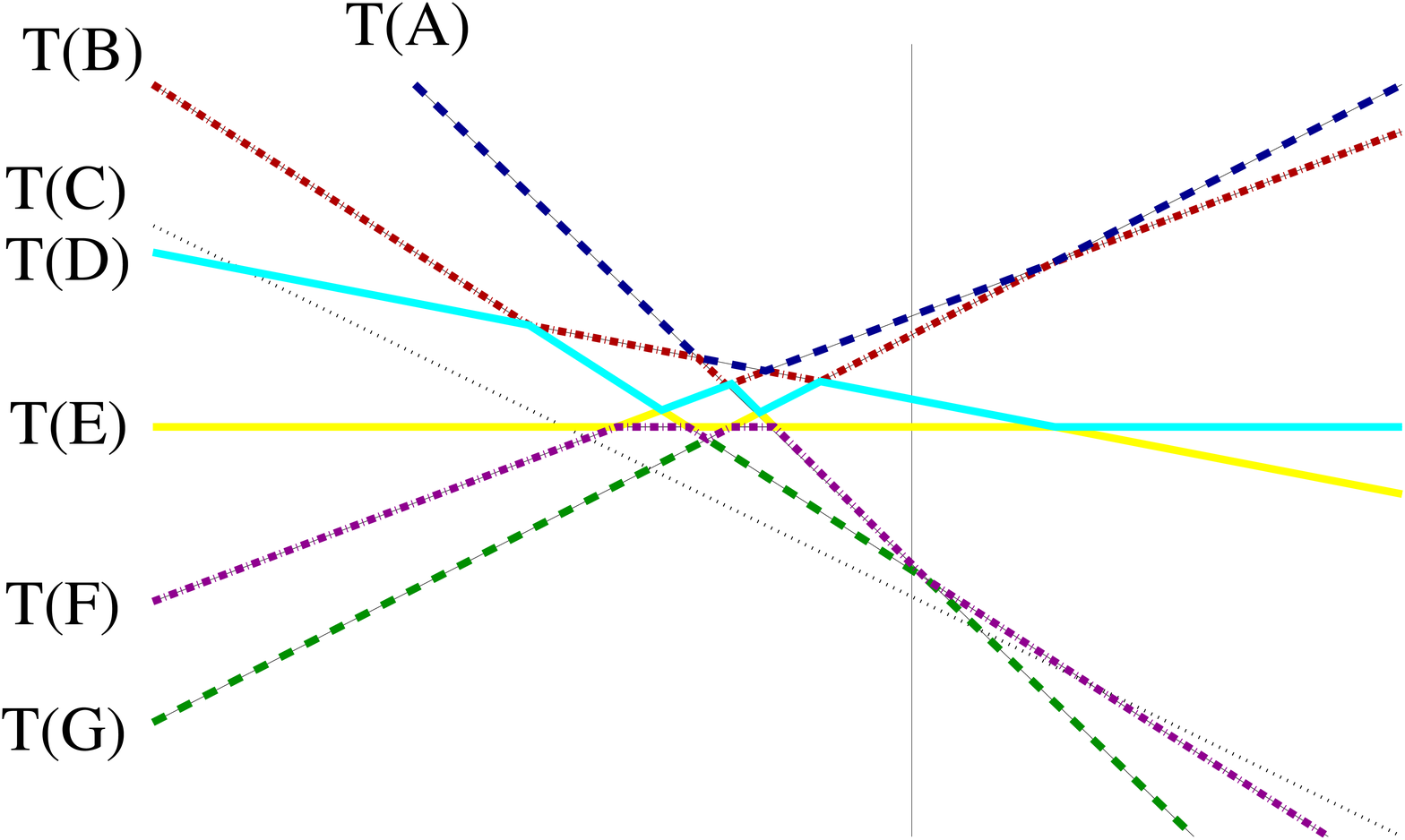}&
\epsfysize=1.4in
\epsffile{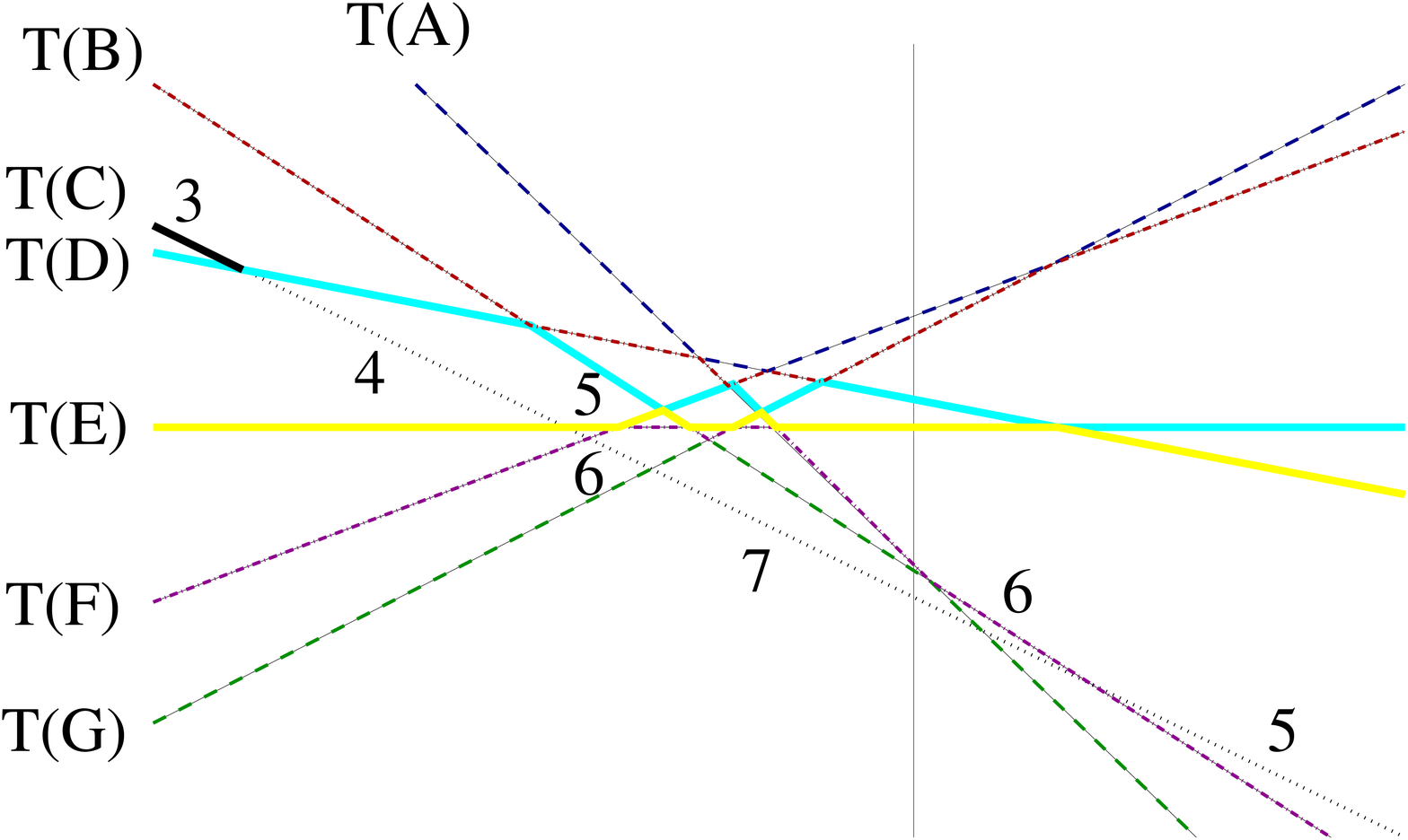}\\
(a)&(b)\\
\epsfysize=1.4in
\epsffile{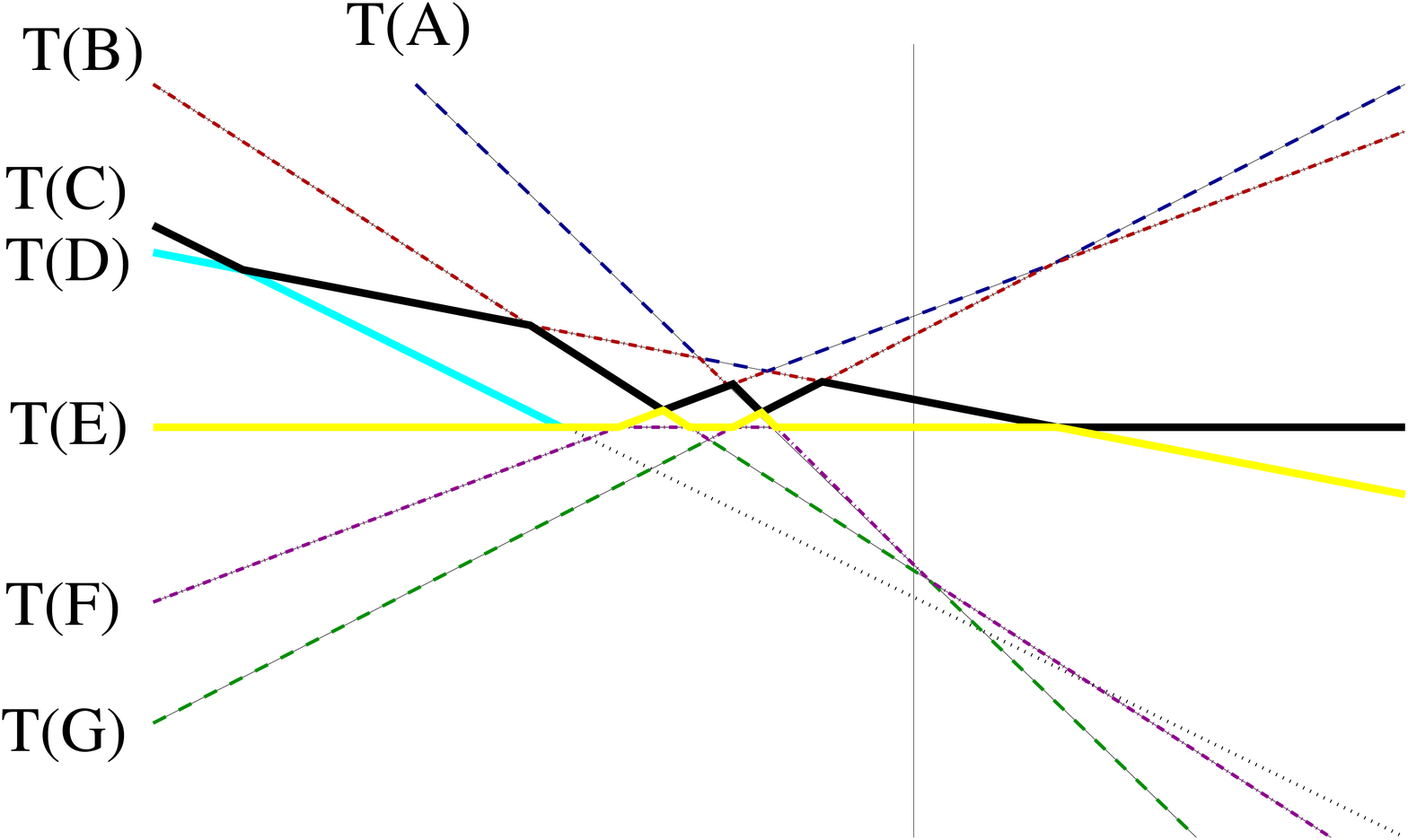}&
\epsfysize=1.4in
\epsffile{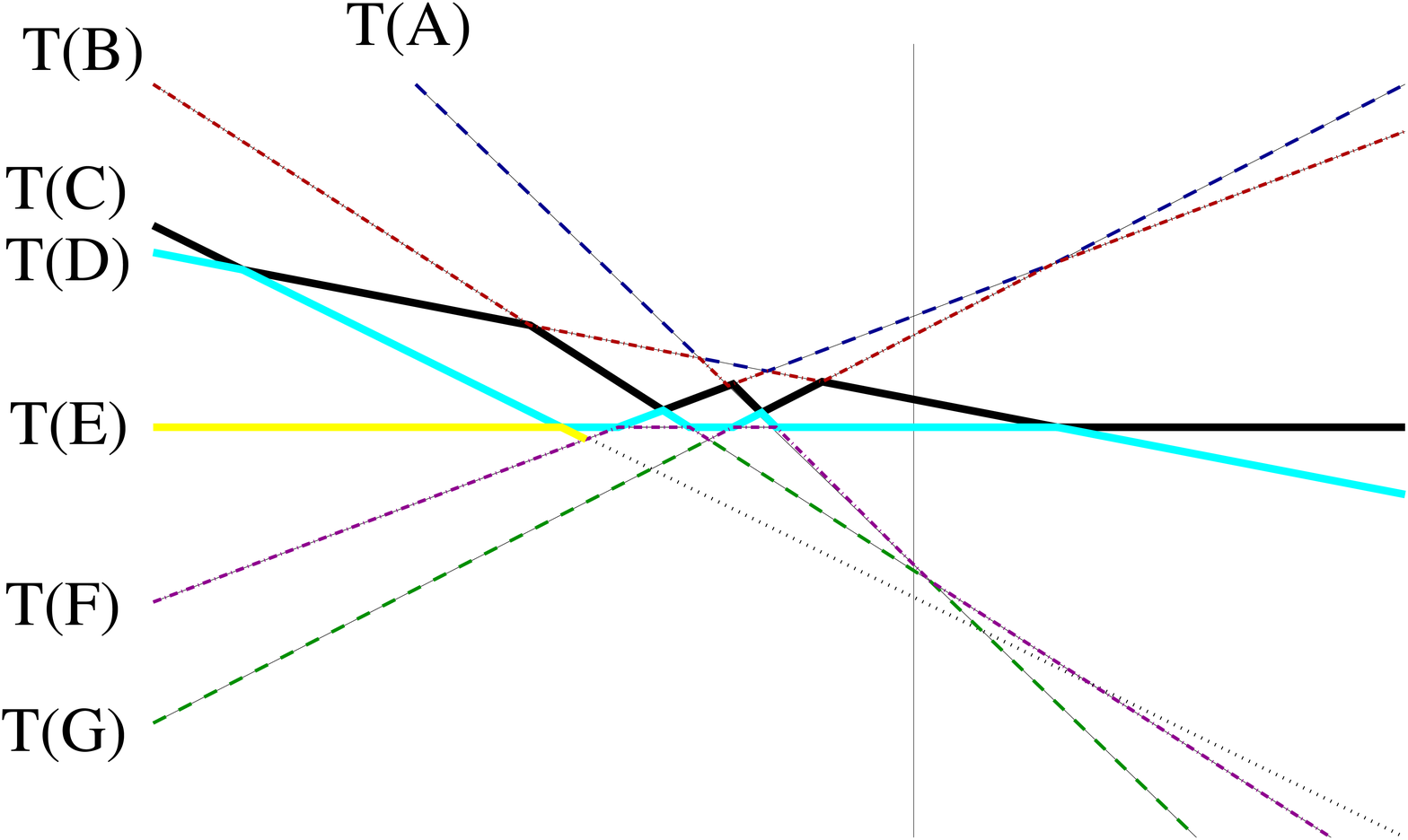}\\
(c)&
(d)\\
\end{tabular}
\end{center}
\caption{(a) The six levels of the arrangement before the arrangement is augmented with $T(C)$.  (b) $T(C)$ has been subdivided where it intersects the other lines of the arrangement.  Each interval (or ray) has been labeled according to its level in the new arrangement.  The new level 3 has been started at $T(C)$.  (c) When the new level 3 intersects the old level 3, the chains switch so the right part of the old level 3 is stitched to the new level 3 and the left part of the old level 3 becomes new level 4.  In addition, the next segment along $T(C)$ is stitched to the new level 4.  (d) When the new level 4 intersects the old level 4, the chains switch in a similar manner as to the stiching in (c).  The next step will be to switch the new level 5 and the old level 5.}
\label{fig:levelstwo}
\end{figure}

\subsubsection{Detailed Operations}
\begin{itemize}
\item \label{step1} Compute the intersection point of $T(q)$ with every line $l$ of the arrangement and sort them according to $x$ coordinate. Since there are only $n$ other lines (and $n$ intersections), this step takes $O(n\log n)$.
\item Start a new chain with the leftmost segment of $T(q)$, at the level corresponding to the order statistic of the slope $T(q)$ with respect to the arrangement.
\item Increment all levels above this new level by 1.
\item Find all of the segments of $T(q)$, and find the arrangement each will join when $q$ is added to the data set.  This can be done in $O(n)$ time since the difference in the level between two sequential segments is $\pm1$.  The choice of $\pm$ is determined comparing the slopes of the two lines at the intersection.
\item Construct the new set of levels by splitting and joining trees, and inserting segments of $T(q)$: Starting with the new level consisting of the leftmost segment of $T(q)$, for each segment of $T(q)$, in order, \texttt{Split} the appropriate chain at the intersection point (the right endpoint of the segment).  Next, \texttt{Join} the top two chains and the bottom two chains at the split point.  Finally, \texttt{Join} the remaining chain to the next segment of $T(q)$.  Continue this until there are no more segments of $T(q)$ to attach.  This step takes $O(\log^2n)$ for each \texttt{Split} and \texttt{Join}, and there are a total of $O(n)$ of these steps (there are a constant number of steps for each segment of $T(q)$).  This results in a total time complexity for the algorithm of $O(n\log^2n)$.
\end{itemize}

\subsubsection{Deletions}
Given a set of points $\mathcal{F}_n$, its associated data structures, and a data point $q$ (and its dual line $T(q)$) we wish to recompute the structure associated with $\mathcal{F}_n \setminus\{q\}$.  When a line $T(q)$ is removed from the arrangement, level $k$ splits between levels $k$ and $k-1$.  Similar to the insertion case, the stitching of the levels must be done appropriately, while maintaining a correct representation of the convex hull at the root of each tree.

\subsubsection{Detailed Operations}
\begin{itemize}
\item \label{step2} Compute the intersection point of $T(q)$ with every line $l$ of the arrangement and sort them according to $x$ coordinate. Since there are only $n$ other lines (and $n$ intersections), this step takes $O(n\log n)$.
\item Find all of the segments of $T(q)$ in the arrangement, and find the level each will leave when $q$ is removed from the data set.  This can be done in $O(n)$ time since the difference in the level between two sequential segments is $\pm1$.  The choice of $\pm$ is determined by comparing the slopes of the lines at the intersection.- $O(n)$.
\item Construct the new set of levels by splitting and merging trees, and removing segments of $T(q)$: Starting with the level consisting of the leftmost segment of $T(q)$, for each segment of $T(q)$, in order, $\texttt{Split}$ both chains at the intersection point (the right endpoint of the segment).  Next, \texttt{Join} the two chains which do not involve segments of $T(q)$.  Discard the chains consisting only of segments of $T(q)$.  Continue this until there are no more segments of $T(q)$ to discard.  This step takes $O(\log^2n)$ for each $\texttt{Split}$ and $\texttt{Join}$, and there are a total of $O(n)$ of these steps (there are a constant number of steps for each segment of $T(q)$).  This results in a total time complexity for the algorithm of $O(n\log^2n)$.
\item Decrement all levels which started below the leftmost segment of $T(q)$.
\end{itemize}

\begin{theorem}
It is possible to compute the cover-based half-space depth contours dynamically in $O(n\log^2 n)$ time per update and $O(n^2)$ overall space.  Moreover, any contour can be reported in $O(m)$ time, where $m$ is complexity of the corresponding levels in the arrangement.
\end{theorem}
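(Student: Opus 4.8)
The plan is to assemble the pieces built up throughout Section \ref{sec:DynamicCover} into a single argument, treating correctness, space, update time, and reporting time in turn. For correctness, I would start from Corollary \ref{cor:inter}, which states that in $\mathbb{R}^2$ the cover-based contour of depth $k/n$ is the boundary of the intersection of all closed half-planes containing exactly $n-k+1$ data points and whose boundary passes through two data points. Splitting this intersection into the upward- and downward-opening families and passing to the dual as in Section \ref{sec:dualarrangement}, I would argue that the downward-opening family corresponds exactly to the intersection points of the dual arrangement lying strictly below $n-k-1$ dual lines, that every such point appears on level $n-k+1$, and that the upper hull of these points (with the slope check at the two extreme vertices to settle whether they have $n-k-1$ or $n-k$ lines above them) recovers the relevant portion of the contour; the lower hull and the upward-opening family are symmetric. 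Hence maintaining the upper and lower hulls of every level is equivalent to maintaining all cover-based contours.

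For the space bound, I would observe that the structure stores two dynamic convex-hull trees, an upper and a lower one, for each level, giving $2n$ trees. Each tree from \citep{Overmars:DynamicHull,DobkinSouvaine:CompGeo} occupies space linear in its number of vertices, and the vertices are partitioned among the levels, so the total is bounded by the size of the full arrangement, namely $O(n^2)$.

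For the update time I would appeal directly to the detailed insertion and deletion procedures. Inserting $T(q)$ crosses the arrangement in exactly $n$ points, so after sorting the crossings in $O(n\log n)$ time the line is cut into $O(n)$ segments whose levels are computed in $O(n)$ time using the $\pm 1$ slope rule. Each segment triggers a constant number of \texttt{Split} and \texttt{Join} operations to restitch the affected chains and insert that segment of $T(q)$; since there are $O(n)$ segments and each \texttt{Split}/\texttt{Join} runs in $O(\log^2 n)$ time, the restitching costs $O(n\log^2 n)$, which dominates the sort. Deletion is handled symmetrically, with levels below the leftmost segment decremented at the end. Throughout, I would emphasize that each \texttt{Split}/\texttt{Join} preserves a correct hull representation at the root of the tree it touches, so the invariant that every level's root holds the current upper (or lower) hull of that level is maintained after each update. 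Finally, to report the contour of depth $k/n$, I would traverse the upper and lower hulls stored at the roots for level $n-k+1$ in order, applying the endpoint slope check, visiting each vertex once and thereby running in $O(m)$ time.

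The main obstacle I expect is the correctness of the stitching rather than the complexity bookkeeping: I must verify that as the sweep along $T(q)$ alternately splits and joins chains, each tree root continues to represent a genuine convex hull of its level, and that the boundary cases — the first and last points of each level's hull, which may carry $n-k$ rather than $n-k-1$ lines above them — are resolved by the slope comparison so that no spurious vertex is admitted and no true contour vertex is dropped. Once that invariant is established, the three claimed bounds follow immediately from the counts above.
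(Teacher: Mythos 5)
Your space and update-time analysis, and your correctness argument for maintaining the hulls of all levels, match the paper's; the genuine error is in your reporting step. You propose to report the contour of depth $k/n$ by traversing the upper \emph{and} lower hulls of the single level $n-k+1$. But the two hulls must come from \emph{different} levels: the paper takes the upper hull of level $n-k+1$ together with the lower hull of level $k$. The upper hull part of your recipe is right, since the relevant vertices of level $n-k+1$ are intersection points with $n-k-1$ dual lines strictly above them, which correspond to downward-opening half-planes containing $n-k+1$ data points. The upward-opening half-planes containing $n-k+1$ data points, however, have $n-k-1$ data points strictly \emph{above} their bounding lines, so their dual points have $n-k-1$ dual lines strictly below them, hence $k-1$ dual lines strictly above them; such points lie on level $k$ (and $k+1$), not on level $n-k+1$. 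The vertices of the lower hull of level $n-k+1$ instead correspond to upward-opening half-planes containing only $k$ or $k+1$ data points, so your procedure intersects the wrong family of half-planes and returns the correct contour only in the coincidental case $k=n/2$. This is also why the theorem's statement refers to the corresponding \emph{levels}, in the plural.

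A secondary omission: even with the correct pair of hulls, mapping them back to the primal yields two convex regions (the intersection of the downward-opening half-planes and the intersection of the upward-opening ones), and the contour is the boundary of their intersection; some vertices of either hull may not survive as contour vertices. The paper performs this final stitching explicitly and observes that it still costs only $O(m)$ because both hulls are naturally sorted by slope, which dualizes to sorting by $x$-coordinate in the primal. Your proposal jumps from ``traverse both hulls, applying the endpoint slope check'' directly to the contour, skipping this intersection step. Your earlier phrase that ``the lower hull and the upward-opening family are symmetric'' is salvageable if symmetry is read as exchanging the level indices $k \leftrightarrow n-k+1$, but as instantiated in your reporting procedure it is incorrect.
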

\begin{proof}
Note that we do not actually maintain the contours themselves, but a structure to report a contour efficiently.  In order to compute the cover-based contour of depth $k/n$, take the upper hull of the $(n-k+1)$th level and the and the lower hull of the $k$th level, removing the endpoints if they are not appropriate, and returning these to the primal, where they represent the intersection of half-planes containing $n-k+1$ data points and whose bounding line passes through two data points.  Finally, the intersection of two convex regions needs to be performed, which can be done in the complexity of the hulls because the hulls are naturally sorted by slope, which corresponds to being sorted by $x$-coordinate in the primal plane.  The convexity of the hulls is bounded by the convexity of the levels themselves.
\end{proof}

%%%%%%%%%%%%%%%%%%%%%%%%%%%%%%%%%%%%%%%%%%%%%%%%%%%%%%
\section{Conclusion}
In this paper, we provided data structures and algorithms for dynamically computing half-space depth of points as well as rank-based and contour-based contours.  The depth of a single point can be maintained in $O(\log n)$ time per update using overall linear space.  The rank-based contours can be maintained in $O(n\log n)$ time per update using overall quadratic space.  The cover-based contours can be maintained in $O(n\log^2 n)$ time per update using overall quadratic space.

In addition, we studied the local cover-based contours near data points and described how they change as data points are added or deleted from the data set.  In this direction, we proved a result of independent interest that at least one local cover-based contour edge does at a point does not change as points are inserted into or deleted from the data set.

We end with the open problem of finding additional dynamic algorithms for other data depth functions as desired by the probability and statistics community, \citep{Liu:Personal}.  One would expect that this problem may be significantly harder than its half-space depth counterpart because most other data-depth functions are not as well-behaved, in the finite sample cases, has half-space depth, see \citep{Zuo:Structural,SimplicialDepth}.
%%%%%%%%%%%%%%%%%%%%%%%%%%%%%%%%%%%%%%%%%%%%%%%%%%%%%%

\bibliographystyle{plainnat}
\bibliography{DataDepth}

\end{document}